\newif\ifhideproofs
\tikzset{
  ->,
  shorten >=1pt,shorten <=1pt,
  auto,
  node distance=1cm,
  initial text={},
  el/.style={font=\scriptsize},
  every state/.style={
    minimum size=10pt,
    inner sep=0pt,
  }
}
\let\epsilon\varepsilon
\let\phi\varphi
\let\emptyset\varnothing
\newcommand\sem[1]{\llbracket #1 \rrbracket}
\newcommand\lcp{\text{lcp}}
\newcommand\dom{\text{dom}}
\newcommand{\calA}{\mathcal{A}}
\newcommand{\calM}{\mathcal{M}}
\renewcommand{\st}{\mid}
\newcommand{\Up}{\mathit{Up}}
\newcommand{\Aup}{\mathcal A_{\Up}}
\newcommand{\xrightarrowstar}[1]{\mathrel{
  \vphantom{\xrightarrow{#1}}
  \smash{\xrightarrow{#1}}
  \vphantom{\to}^*}
  }
\title{Active Learning of Sequential Transducers with Side Information about the Domain}
\author{
    Rapha\"el 
    Berthon\inst{1,2} \and
    Adrien 
    Boiret\inst{1} \and
    Guillermo A. P\'erez\inst{2} \and
    Jean-Fran\c{c}ois 
    Raskin\inst{1}
}
\institute{Universit\'e libre de Bruxelles, Brussels, Belgium \and
University of Antwerp -- Flanders Make, Antwerp, Belgium}
\begin{document}

\maketitle

\begin{abstract}
Active learning is a setting in which a student queries a teacher, through membership and equivalence queries, in order to learn a language.
Performance on these algorithms is often measured in the number of queries required to learn a target,
with an emphasis on costly equivalence queries.
In graybox learning, the learning process is accelerated by foreknowledge of some information on the target.
Here, we consider graybox active learning of subsequential string transducers,
where a regular overapproximation of the domain is known by the student.
We show that there exists an algorithm using string equation solvers that uses this knowledge
to learn subsequential string transducers with a better guarantee on the required number of equivalence queries
than classical active learning.
\end{abstract}

\section{Introduction}

Angluin's seminal work on (active) learning regular languages from queries and counterexamples~\cite{angluin1987query} initiated a body of works around the automated ``learning'' of 
black-box models.
Active learning 
is a way for a non-expert user to describe a formal object through behavioural examples and counterexamples,
or to obtain formal models for the behaviour of legacy or black-box systems
which can subsequently be formally verified~\cite{vaandrager17}. In this context,
additional information about black-box systems made available to the student can make learning more efficient
in practice~\cite{vaandrager-grey,benchmarks-vaandrager}.

The $L^*$ algorithm from~\cite{angluin1987query} has been extended to learn various classes of formal object,
e.g. probabilistic automata~\cite{HigueraOncina04} and, more relevant to this paper,
(subsequential deterministic) transducers on words~\cite{vilar1996query}.
%
In this work, we aim to learn transducers,
and focus on a specific class of side information:
an upper bound on the domain of the transduction.
The advantage of this \emph{graybox} model
is twofold.
First and more directly, it can be used to skip some membership queries outside the transformation's domain.
Second, by looking for transducers with the proper behaviour when limited to the upper bound,
we allow for solutions that are smaller than the canonical objects learned by $L^*$.
This, in turn, offers better guarantees than $L^*$ when we consider the number of equivalence queries required to learn a target.
This is relevant, as in cases like non-expert description or legacy-system learning,
the equivalence test is realistically unreliable, or prohibitively costly, when compared to the rest of the operations.

One motivation to focus on learning transducers, and more specifically Mealy machines, with an upper bound on the domain
comes from games.
In multi-player verification games, assumptions about other players have been proposed to facilitate strategy synthesis~\cite[for instance]{rational-synthesis,assume-admissible-synthesis}.
Such assumptions also make sense when a strategy has already been obtained (via synthesis~\cite{bcj18} or some alternative means) and one wishes to ``minimize'' it or its encoding.
A simple way to do so is to restrict the domain of the strategy to the reachable set of game configurations (under the assumptions made about the adversaries).
Finally, when the game formalism considered allows for delays or multiple choices made unilaterally by some player --- as is the case in regular infinite games~\cite{reg-inf-games} --- strategies are not implementable by Mealy machines but rather require general transducers.




\subsubsection*{Related work.}

The classical algorithm for active learning is $L^*$ 
~\cite{angluin1987query}.
It saturates a table of observations with membership queries,
then building a minimal deterministic automaton compatible with those observations to send as candidate for an equivalence query. A polynomial number of membership queries and at most $n$ equivalence queries are always sufficient to learn the automaton.

For transducers, the OSTIA algorithm~\cite{vilar1996query} generalizes $L^*$, follows a similar structure,
and offers comparable guarantees.
Like in $L^*$, the number of queries
is polynomial in the size of the minimal normal form of the target transducer.

In the case of graybox learning, the methods differ and this alters the complexity guarantees.
For instance, when learning languages from so-called ``inexperienced teachers''~\cite{leucker2012learning},
one considers a case where the teacher sometimes answers a membership query with ``I don't know''.
Under those circumstance, it is impossible to learn a unique minimal automaton.
This leads to a trade-off in complexity. On the one hand, finding the minimal automaton compatible with an incomplete
table of observations necessitates calls to \NP{} oracles (a SAT encoding is used in ~\cite{leucker2012learning}).
On the other hand, obscuring a regular language by replacing some information with ``I don't know'' will always
make the size of the minimal solution smaller or equal to the canonical minimal deterministic automaton.
When the area where the inexperienced teacher cannot answer with certainty is assumed to be regular, an $L^*$-like algorithm is provided in~\cite{leucker2012learning}.

Another work on the topic~\cite{abel2016gray} concerns Mealy machines, i.e. transducers that write one letter exactly for each letter they read.
It is shown that one can learn a composition of two Mealy machines if the first one is already known.
Just like in~\cite{leucker2012learning}, the $L^*$-type algorithm uses oracles to find minimal machines
compatible with an incomplete table of observations (as we can only know the behaviour of the second machine on the range of the first)
and offers a guarantee in the number of equivalence queries bound to the number of states
of the minimal second machine, rather than that of the composition in whole.

\subsubsection*{Contributions.}
We 
show how
to use string equations that can be encoded into SAT
to find 
a minimal transducer compatible with incomplete observations,
and to use this in an $L^*$-like algorithm. 
Our algorithm is guaranteed to issue a number of equivalence query that is bounded by
the minimal compatible transducer, rather
than the canonical one.
This difference can be a huge benefit when our upper bound is the result of known complex logical properties
or elaborate formats respected by the input, and the transformation we wish to learn is simple.

We note the differences with~\cite{leucker2012learning,abel2016gray} in objects learned, learning frameworks,
and available queries.
We focus on transducers, a class that subsumes automata and Mealy machine.
As an added benefit, transducers are as compact as automata, and as or more compact than Mealy machines they are equivalent to.
This compactness preserves or improves the equivalence queries guarantees.
In our learning framework, the upper bound is supposed to be known by the student.
This is in contrast to the inexperienced teacher case, where the scope of possible observations is unknown,
and has to be assumed regular and learned on the fly.
When it comes to available queries, \cite{leucker2012learning} assumes the student has access to containment queries
i.e. student can ask teacher if the candidates' language contains or is contained in the target, this to obtain better the guarantees.
In our model, a simple equivalence query is considered. 
Conversely, in~\cite{abel2016gray}, the only way to do a membership query is to do so on the composition of both machines.
In that regard, learning a composition is more constraining than learning with a known upper bound.
However, since finding a reverse image to an output word through a transducer is possible with good complexity,
our algorithm can be adapted to learn a composition of two transducers, in the
framework of~\cite{abel2016gray}.

\section{Preliminaries}\label{subsec:TransducersDefinition}
A \textit{(subsequential string) transducer} $\calM$ is a tuple $(\Sigma,\Gamma,Q,q_0,w_0,\delta,\delta_F)$ where $\Sigma$ is the finite input alphabet, $\Gamma$ is the finite output alphabet, $Q$ is the finite set of states, $q_0\in Q$ is the initial state, $w_0 \in \Gamma^*$ is an initial production, $\delta$ is the transition function, a partial function $Q\times\Sigma\to Q\times\Gamma^*$ and $\delta_F$ is the final function, a partial function $Q\to\Gamma^*$.
%
If $\delta(q,a)=(q',w)$ we note $q\xrightarrow{a|w}q'$. If $\delta_F(q)=w$ we say that $q$ is final, and note $q\xrightarrow{w}\top$.
We define the relation $\rightarrow^*$ by combining the input and output of several transitions: 
$\rightarrow^*$ is the smallest relation such that $q\xrightarrowstar{\varepsilon|\varepsilon}q$, and if $q\xrightarrowstar{u|w}q'$ and $q'\xrightarrow{a|w'}q''$ then $q\xrightarrowstar{ua|w\cdot w'}q''$.
We write $q_0\xrightarrowstar{u|w}q$ when $u$ reaches the state $q$ with partial output $w$. 

For every state $q \in Q$, we associate a partial function $\sem{\calM^q}(u)$ to $\calM$ from input words over $\Sigma$ to output words over $\Gamma$. Formally, \(\sem{\calM^q}(u) = w\cdot w'\) if \(q\xrightarrowstar{u|w}q_F\) and \(q_F\xrightarrow{w'}\top\) for some $q_F \in Q$ and is undefined otherwise.
Finally, we define $\sem{\calM} \coloneqq w_0\cdot\sem{\calM^{q_0}}$ and write that $\calM$ implements $\sem{\calM}$.

We write $\dom(\sem{\calM})$ to denote the domain of $\sem{\calM}$, that is the set of all $u \in \Sigma^*$ that reach a final state $q_F \in Q$. We often consider the restriction of $\sem \calM$ to a given domain $D\subseteq \Sigma^*$, and denote it $\sem{\calM}_{|D}$.




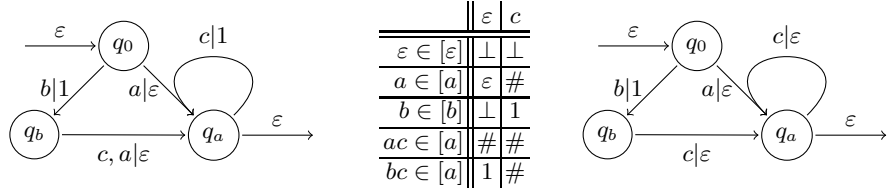
\begin{figure}
\center
\begin{minipage}{0.35\textwidth}
\center
	\begin{tikzpicture}
	\tikzstyle{state}=[draw, circle, minimum width=15pt]
		\node[state] (q0) at (0,0) {$q_0$} ;
		\node[state] (qa) [below right= of q0] {$q_a$} ;
		\node[state] (qb) [below left= of q0] {$q_b$} ;
		\node (inq0) [left= of q0] {} ;
		\node (exqa) [right= of qa] {} ;

        \draw[->] (q0) edge node[left] {$a|\varepsilon$} (qa);
        \draw[->] (q0) edge node[left] {$b|1$} (qb);
        \draw[->] (qb) edge node[below] {$c,a|\varepsilon$} (qa);
        \draw[->] (qa) edge[loop] node[above] {$c|1$} (qa);
        \draw[->] (inq0) edge node[above] {$\varepsilon$} (q0);
        \draw[->] (qa) edge node[above] {$\varepsilon$} (exqa);
 	\end{tikzpicture}
\end{minipage}
\begin{minipage}{0.25\textwidth}
\center
            \begin{tabular}{r||c|c}
             & $\varepsilon$ & $c$ \\
             \hline
             \hline
            $\varepsilon\in[\varepsilon]$ & $\bot$ & $\bot$ \\
             \hline
            $a\in[a]$ & $\varepsilon$ & $\#$ \\
             \hline
            $b\in[b]$ & $\bot$ & $1$ \\
             \hline
            $ac\in[a]$ & $\#$ & $\#$ \\
             \hline
            $bc\in[a]$ & $1$ & $\#$ \\
        \end{tabular}
\end{minipage}
\begin{minipage}{0.35\textwidth}
\center
        	\begin{tikzpicture}
	\tikzstyle{state}=[draw, circle, minimum width=15pt]
		\node[state] (q0) at (0,0) {$q_0$} ;
		\node[state] (qa) [below right= of q0] {$q_a$} ;
		\node[state] (qb) [below left= of q0] {$q_b$} ;
		\node (inq0) [left= of q0] {} ;
		\node (exqa) [right= of qa] {} ;

        \draw[->] (q0) edge node[left] {$a|\varepsilon$} (qa);
        \draw[->] (q0) edge node[left] {$b|1$} (qb);
        \draw[->] (qb) edge node[below] {$c|\varepsilon$} (qa);
        \draw[->] (qa) edge[loop] node[above] {$c|\varepsilon$} (qa);
        \draw[->] (inq0) edge node[above] {$\varepsilon$} (q0);
        \draw[->] (qa) edge node[above] {$\varepsilon$} (exqa);
 	\end{tikzpicture}
 	
\end{minipage}
    \caption{On the left, a transducer compatible with the merging map in the center, on the right the transducer resulting from this merging map. 
    }
    \label{fig:mergingnotbij}
\end{figure}

\begin{example}\label{ex:tau_abc}
Consider the function $\tau_{abc}$ with domain $\Up_{abc}=(a+bc)c^*$ and $\tau_{abc}(ac^n)=\tau_{abc}(bc^n)=1^n$.
It is implemented by the left transducer in Figure~\ref{fig:mergingnotbij}.\
\end{example}

We note that if we want to restrict a transducer's function to a regular language $L$ for which we have
a deterministic word automaton $\calA$, a classic construction is to build the product transducer $\calM\times\calA$,
where the states are the Cartesian products of both state spaces,
and the final function $\delta_F$ is only defined for pairs $(q,p)$
where $q$ is in the domain of the final function of $\calM$
and $p$ is final in $\calA$.
This transducer implements the function $\sem{\calM}_\calA$.

\paragraph{Assumptions. }
We write $|\calM|$ to denote the \emph{size} of $\calM$, i.e. its number of states.
For convenience, we only consider \emph{trim} transducers, that is to say that
every state $q$ is reachable from $q_0$ and co-reachable from a final state.

\paragraph{Active learning. }
Let $\Sigma$ and $\Gamma$ be finite input and output alphabets respectively. Further, let $\tau \colon \Sigma^* \to \Gamma^*$
be a partial function implementable by a transducer. In this work we will be interested in \emph{actively learning} a transducer implementing $\tau$ by interacting with a \emph{teacher} who knows $\tau$ and can answer questions our algorithm asks about $\tau$. Formally, the teacher is an oracle that can answer \emph{membership} and \emph{equivalence} queries.

Given $u\in\Sigma^*$, a membership query answers $\tau(u)$ if $u\in \dom(\tau)$, and $\bot$ otherwise
Given $\calM$ a transducer, an equivalence query answer \emph{true} if $\sem{\calM}=\tau$, otherwise it provides $u\in\Sigma^*$, a non-equivalence witness such that
        $u\in\dom(\sem\calM)\backslash\dom(\tau)$, or $u\in\dom(\tau)\backslash\dom(\sem\calM)$,
        or $u\in\dom(\sem\calM)\cap\dom(\tau)$ but $\sem{\calM}(u) \neq \tau(u)$.
The goal of a learning algorithm in this context is to produce a transducer $\calM$ such that $\sem{\calM} = \tau$.

\paragraph{Side information about the domain. }
We generalize the active learning problem by introducing side information available to the learning algorithm. Concretely, we assume that an \emph{upper bound} on the domain of $\tau$ is known in advance. That is, we are given a DFA $\Aup$ whose language $\Up$ is such that $\dom(\tau) \subseteq \Up$. The goal of a learning algorithm to produce a transducer $\calM$ such that $\sem{\calM}_{|\Up} = \tau$.

The domain upper bound $\Up$ may allow us to learn much simpler transducers $\calM$
than the canonical 
minimal transducer describing $\tau$ --- i.e. the class of transducers learnt by OSTIA.
For instance, if $\tau$ is the identity transformation over a complex regular language $L$,
then the canonical transducer $\calM_\tau$ will have as many states as the minimal DFA $\calA$ for $L$,
and thus will be as difficult to learn and require several equivalence tests, potentially many.
However, if $\Up=L$ is already known, then the smallest transducer $\calM$ such that $\sem{\calM}_{|\Up} = \tau$
is a simple one-state transducer, and will be identified in a single equivalence query.

\section{Learning transducers with side information}
\label{sec:TransducersUp}

Our algorithm uses an ``observation table'' $T$ based on a finite prefix-closed subset $P$ of $\Sigma^*$ and a finite suffix-closed subset $S$ of $\Sigma^*$. Formally, we define $T$ as a function $(P \cup P \cdot \Sigma) \cdot S \to \Gamma^* \cup \{\#, \bot\}$ and maintain the following invariant for all $u \in (P \cup P \cdot \Sigma)$ and all $v \in S$.
If $u \cdot v \not\in \Up$ then $T(u \cdot v) =\#$. If $u \cdot v \in \Up \setminus \dom(\tau)$ then $T(u \cdot v) =\bot$, otherwise $T(u \cdot v) =\tau(u \cdot v)$.
For technical reasons, we often consider the set $P_T$ of prefixes of the elements of $(P\cup P\Sigma)\cdot S$.


\begin{definition}[Compatible transducer]
Let 
$T$ be an observation table
and $\calM$ a transducer of input alphabet $\Sigma$ and output alphabet $\Gamma$.
We say that $\calM$ is compatible with $T$ if for all $u,v\in P\cup P\Sigma$, if $T(u\cdot v)\in\Gamma^*$ then $\sem \calM(u\cdot v)=T(u,v)$ and if $T(u\cdot v)=\bot$ then $u\cdot v\not\in\dom(\sem \calM)$.
\end{definition}

To ``fill'' the table so as to satisfy the invariant, we pose membership queries to the teacher.
Once $T$ has certain satisfactory properties (as defined in the OSTIA algorithm and elaborated upon briefly), we are able to construct a transducer $\calM$ from it.
As a table $T$ can be filled with $\#$, multiple minimal transducers may be \emph{compatible with $T$}.
To minimize the number of equivalence queries posed, we will 
send an equivalence query only if there is
a unique minimal transducer $\calM$ (up to equivalence in $\Up$) compatible with $T$. 

Instead of searching directly for transducers,
we work only with the information on how those transducers behave on $P_T$.
We represent this information using objects we call \emph{merging maps}. 
We show that we can characterize when there exist two competing minimal transducers with two different merging maps,
or two competing minimal transducers with the same merging map.
If neither is the case, then there is a unique minimal compatible transducer $\calM$, and we 
build it by guessing its merging map.
We then pose an equivalence query to the teacher in order to determine whether $\Aup\times \calM$
implements the target function $\tau$.



\paragraph{Satisfactory properties. }
The following properties are those that allow the OSTIA algorithm~\cite{vilar1996query} to work.
Under these properties, we are sure that a transducer can be derived from the table $T$.
They are defined on a specific table $T: (P\cup P\Sigma)\cdot S\rightarrow\Gamma^*\cup\lbrace\bot\rbrace$. Given $u\in P\cup P\Sigma$, we call $\lcp_T(u)$ the longest common prefix of all the $T(u\cdot v)$ in $\Gamma^*$.
For $u,u'\in P\cup P\Sigma^*$, we say that $u\equiv_{T}u'$ iff for all $v\in S$, we have both $T(u\cdot v)=\bot\iff T(u'\cdot v)=\bot$ and if $T(u\cdot v)\in\Gamma^*$ then $\lcp_T(u)^{-1}T(u \cdot v)=\lcp_T(u')^{-1}T(u'\cdot v)$.
A table $T$ is \textit{closed} if for all $ua\in P\Sigma$ there exists $u'\in P$ such that
    $ua\equiv_{T}u'$;
    \textit{$\equiv$-consistent}, if
    for all $u,u'\in P$, $a\in\Sigma$ such that $ua,u'a\in P_T$,
    then $u\equiv_{T}u'\implies ua\equiv_{T}u'a$;
    \textit{$\lcp$-consistent}, if for all $ua\in P\cup P\Sigma$, we have that $\lcp_T(u)$ is a prefix of $\lcp_T(ua)$. 

The role of these notions in Algorithm~\ref{proc:algoUpTransducer} is twofold.
First, it guarantees that the algorithm could, at worst, find the same transducer as the OSTIA algorithm~\cite{vilar1996query} as a candidate for an equivalence query
from a closed, $\equiv$-consistent, $\lcp$-consistent table.
Second, it can be seen as an efficient way to acquire information for a learner.
We can see closed, $\equiv$-consistent, $\lcp$-consistent tables as those that are saturated with membership queries,
which means that no further information can be
obtained by a learner without resorting to more costly operations,
e.g. MinGen($T$), CompetingMinGen($T$), or an equivalence query.

\paragraph{Merging maps. } 

For any given table $T$ there are infinitely many compatible transducers.
This was already the case in automata or Mealy Machines~\cite{leucker2012learning,abel2016gray}.
However, where transducers differ, is that even when limiting ourselves to transducers with a minimal number of states, this might still be the case.
Indeed, on some transitions, the output can be arbitrary (see Example~\ref{ex:MutedOpen}).
As a consequence, the method we will use to obtain a compatible transducer from a finite search space
combines the methods of~\cite{leucker2012learning} with the addition of partial output information
and an additional constraint on the output of transitions.

We want to obtain concomitantly\
an equivalence $\equiv$ on $P_T$ that describes the set of states of the transducer 
and a \emph{partial output} function $f:P_T\rightarrow\Gamma^*$
that describe which output is produced while reading an input.
In the context of transducers, side information adds another restriction:
A transducer can contain transitions that do not link together elements of $P_T$
for which we have actual output information in $T$.
This is a problem, as the output of such transitions is arbitrary and leads to an infinite number of candidates.

We will represent the behaviour of a transducer on $P_T$ but keep only the output information
that can be
corroborated in $T$.
%
We call $P_\Gamma\subseteq P_T$ the set of all $u\in P_T$
such that there exists $v\in\Sigma^*$ for which $T(u\cdot v)\in\Gamma^*$.
We call $P_\emptyset\subseteq P_T$ the set of all $u\in P_T$
such that there is no $v\in\Sigma^*$ for which $T(u\cdot v)\in\Gamma^*$.

\begin{definition}[Merging map]\label{def:MergingMap}
Let $T$ be an observation table. 
A \emph{merging map} (MM) on $T$ is a pair $(\equiv,f)$ where
$\equiv$ is an equivalence relation on $P_T$, and
$f$ is a partial function from $P_T$ to $\Gamma^*$,
such that for all $u,u'\in P_T$ and $a\in\Sigma$:

\begin{enumerate}
    \item If $f(u)$ does not exist then $u\equiv u'\ f(u')$ does not exist.
    \item If there exists $v\in \Sigma^*$ and $T(u\cdot v)\in\Gamma^*$ then $f(u)$ is a prefix of $T(u\cdot v)$.
    \item If $f(ua)$ exists, then $f(u)$ exists and is a prefix of $f(ua)$.
    \item If we have that $f(u)$ exists, $u\equiv u'$ and $ua,u'a\in P_T$ then $ua\equiv u'a$ and if $f(ua)$ exists then $f(u)^{-1}f(ua)=f(u')^{-1}f(u'a)$.
    \item If $T(u)\in\Gamma^*$, $u\equiv u'$ then $T(u')\neq\bot$ and if $T(u')\in\Gamma^*$ then $f(u)^{-1}T(u)=f(u')^{-1}T(u')$.
    \item If $f(ua)$ exists, there is no $v\in P_T$ such that $v\equiv u$, and $va\in P_\Gamma$,
    then $f(ua)=f(u)$.
\end{enumerate}
\end{definition}

The idea behind this definition is that a MM $(\equiv,f)$ contains the information necessary
to model the behaviour on $P_T$ of a transducer compatible with $T$,
while rule $6$ ensures we only keep output information from $P_\Gamma$.
If such a couple $(u,a)$ exists, we say that it is \emph{muted}.
Notably, every transducer $\calM$ compatible with $T$ has an underlying MM $(\equiv,f)$,
and conversely, every MM $(\equiv,f)$ can be used to build a transducer $\calM$ compatible with $T$.
The size of a MM is the number of equivalence classes of $\equiv$ in $\dom(f)$.

\begin{definition}[Resulting Transducer]
\label{def:restr}
Let $T$ be an observation table 
and $(\equiv,f)$ a MM on $T$.
In the transducer $\calM$ \emph{resulting from} $(\equiv,f)$ the set of states is the set of equivalence classes of $\equiv$ in $\dom(f)$ (For convenience, for $u\in P_T$ we call $q_u$ the state associated to it), the initial state is $q_\varepsilon$, the initial production is $f(\varepsilon)$, the transitions are $q_u\xrightarrow{a|f(u)^{-1}f(ua)}q_{ua}$ for $u,ua\in\dom(f)$, and for each $u$ such that $T(u)\in\Gamma^*$, we have $\delta_F(q_u)=f(u)^{-1}T(u)$

\end{definition}

\begin{definition}[Induced MM]
\label{def:InducedMergingMap}
Let $T$ be an observation table 
and $\calM$ a transducer compatible with $T$.
The MM $(\equiv,f)$ \emph{induced by the transducer} $\calM$ is such that we have
\begin{enumerate*}[label=(\Alph*)]
    \item $u\equiv v$ iff $u$ and $v$ reach the same state of $\calM$;
    \item for all $u\in P_T$, $a\in\Sigma$ such that $ua\in P_T$ reaches a state $q$ of $\calM$:
    \begin{enumerate*}[label=(B.\Roman*)]
        \item if there exists $v\in P_T$ such that $v\equiv u$, and $va\in P_\Gamma$, then
        $f(ua)=f(u)\cdot\delta(q,a)$
        \item and if $(u,a)$ is muted, then $f(ua)=f(u)$.
    \end{enumerate*}
\end{enumerate*}

\end{definition}

As an important note, these transformations are not one-to-one and do not establish a bijection
between transducers and MMs.
Notably, some transducers compatible with $T$ cannot be obtained with this method.
For instance, let us consider a table full of $\#$.
Since no $T(u)$ is ever in $\Gamma^*$, there is no final state in any transducer created with this method.

Rather than a problem, this is the goal of projecting the transducers' behaviour on $P_T$,
as we naturally eliminate candidates that are needlessly elaborate outside of the observations of $T$.
The MM induced by $\calM$ only contains information on its behaviour on $P_T$,
and the transducer resulting from a MM is the transducer with the smallest amount of states and transitions
whose behaviour on $P_T$ matches what is described in the MM.

\paragraph{Learning algorithm. } Our learning algorithm works as follows:
\begin{enumerate*}[label=(\arabic*)]
    \item We build up $T$ until it is closed and $\equiv$ and $\lcp$-consistent.
    \item If two minimal compatible transducers exist, we
    find them and a word $u$ to tell them apart.
    We use a membership query on $u$ and start again.
    \item If only one minimal compatible transducer $\calM$ remains, we
    find it.
    We use an equivalence query on $\Aup\times\calM$.
\end{enumerate*}
\begin{algorithm}
\caption{}
\begin{algorithmic}[1]
    \State Let $P=S=\{\varepsilon\}$
    \While{True}
    \If{$(u,a,v,v')$ is a witness of non-$\lcp$-consistency}
        add $av,av'$ to $S$
    \ElsIf{$(u,u',a,v)$ is a witness of non-$\equiv$-consistency}
        add $av$ to $S$
    \ElsIf{$ua$ is a witness of non-closure}
        add $ua$ to $P$
    \ElsIf{$u$ := CompetingMinGen($T$) $\neq \emptyset$}
        add $u$ and its suffixes to $S$
    \Else{ }
        $\calM$ := MinGen($T$)
        \If{$u$ is a non-equiv. witness for $\Aup\times\calM$}
            add all its suffixes to $S$
        \Else{ }
            return $\calM$
        \EndIf
    \EndIf
    \EndWhile
\end{algorithmic}
\caption{MinTransducerUp($L$,$\Aup$)}
\label{proc:algoUpTransducer}
\end{algorithm}
Such an algorithm allows using the knowledge of $\Up$ to propose
more compact candidates, as the minimal transducer compatible with a table $T$ is always smaller
than the canonical transducer that can be derived from $T$ if we substitute $\bot$ for the $\#$.
This smaller model size leads to a better guarantee when it comes to the number of required equivalence queries. The full algorithm is in Algorithm~\ref{proc:algoUpTransducer}. It uses the subprocedures CompetingMingGen and MinGen which we elaborate upon later.

\begin{theorem}\label{thm:guar-tr}
    Algorithm~\ref{proc:algoUpTransducer} terminates and makes a number of equivalence queries bounded by the number of states of a minimal $\calM$ such that $\sem{\calM}_{|\Up} = \tau$.
\end{theorem}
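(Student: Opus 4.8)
The plan is to fix a minimal transducer $\calM_0$ with $\sem{\calM_0}_{|\Up}=\tau$, write $n_0=|\calM_0|$, and prove separately that at most $n_0$ equivalence queries are made and that the loop halts. The linchpin of both arguments is that $\calM_0$ is compatible with \emph{every} table $T$ the algorithm builds. Indeed, by the table invariant each defined entry $T(u\cdot v)$ equals $\tau(u\cdot v)$ when $u\cdot v\in\dom(\tau)$ and $\bot$ when $u\cdot v\in\Up\setminus\dom(\tau)$; since $\dom(\tau)\subseteq\Up$ and $\sem{\calM_0}$ agrees with $\tau$ on all of $\Up$, $\calM_0$ satisfies both clauses of compatibility (the $\#$-entries lie outside $\Up$ and impose no constraint). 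Hence $\mathrm{MinGen}(T)$, being a \emph{minimal} compatible transducer, always has at most $n_0$ states. Moreover, enlarging $P$ or $S$ only adds constraints, so the compatible transducers form a shrinking family and the quantity $m(T)\coloneqq|\mathrm{MinGen}(T)|$ is non-decreasing along the run and bounded above by $n_0$.

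For the query bound I would show that every \emph{failed} equivalence query strictly increases $m(T)$. A query is posed only when $\mathrm{CompetingMinGen}(T)=\emptyset$, i.e.\ when all minimal compatible transducers are equivalent on $\Up$; write $k=m(T)$ and let $\calM=\mathrm{MinGen}(T)$ be the representative queried. If $\Aup\times\calM$ is rejected with witness $u\in\Up$, then after appending the suffixes of $u$ to $S$ the refined table $T'$ records the entry $T'(\varepsilon\cdot u)\in\Gamma^*\cup\{\bot\}$ (it is not $\#$, since $u\in\Up$), which $\calM$ violates. Thus $\calM$ — and with it every $\Up$-equivalent transducer, in particular \emph{every} size-$k$ minimal compatible transducer of $T$ — is incompatible with $T'$. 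As any transducer compatible with $T'$ is already compatible with $T$, no size-$k$ transducer survives, so $m(T')>k$. Since $m$ is non-decreasing, bounded by $n_0$, and strictly increases at each failed query, at most $n_0$ equivalence queries occur overall (the first $n_0-1$ possibly failing, the last succeeding), which is exactly the claimed bound; and termination in the returning branch means the output $\calM$ satisfies $\sem{\calM}_{|\Up}=\tau$.

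Termination then requires bounding the iterations that do \emph{not} issue an equivalence query. The $\lcp$-consistency, $\equiv$-consistency and closure repairs act on the $\bot$-substituted table (replacing every $\#$ by $\bot$) exactly as in OSTIA, each either refining $\equiv_T$ or adding a newly represented class to $P$; by the standard OSTIA argument only finitely many occur, bounded in terms of the canonical transducer for $\tau$. Failed equivalence queries are bounded by $n_0$ as above. It remains to bound the $\mathrm{CompetingMinGen}$ repairs, each of which appends a suffix to $S$. Here I would argue by saturation: since $\tau$ is transducer-implementable and $\Up$ is regular, the output behaviour on $\Up$ is governed by a finite structure (a product of the minimal transducer for $\tau$ with $\Aup$), so there is a finite suffix set $S^{\ast}$ such that once $S\supseteq S^{\ast}$ and the table is closed and consistent, the minimal compatible transducer is unique up to $\Up$-equivalence and $\mathrm{CompetingMinGen}$ returns $\emptyset$. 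As every non-terminating, non-closure iteration strictly enlarges $S$, an infinite run would force $S\supseteq S^{\ast}$, contradicting the recurrence of $\mathrm{CompetingMinGen}$ repairs; hence every branch fires finitely often and the algorithm halts.

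The main obstacle is precisely this last point: making rigorous the finite saturating set $S^{\ast}$ and the assertion that complete suffix information forces a unique minimal compatible transducer up to $\Up$-equivalence. This is where the merging-map machinery and the correctness of $\mathrm{MinGen}$ and $\mathrm{CompetingMinGen}$ must be invoked — one must show that a minimal merging map is pinned down once $\equiv_T$ stabilises to the true Nerode congruence of the wildcard observation function and all relevant output residuals are recorded, so that the freedom left on \emph{muted} transitions can never yield two $\Up$-inequivalent minimal candidates. By contrast, the equivalence-query bound is comparatively clean once one has the compatibility of $\calM_0$ and the uniqueness guaranteed by $\mathrm{CompetingMinGen}(T)=\emptyset$.
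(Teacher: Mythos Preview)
Your argument for the equivalence-query bound is correct and matches the paper's proof almost line for line: the paper also observes that the target transducer is compatible with every table, that $\mathrm{CompetingMinGen}(T)=\emptyset$ forces all minimal compatible transducers to be $\Up$-equivalent, and that the counterexample added after a failed query rules out \emph{every} transducer of the current minimal size, so the minimum strictly increases.

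The termination argument, however, diverges from the paper and the divergence is exactly the gap you flag yourself. You try to exhibit a finite saturating suffix set $S^\ast$ beyond which the minimal compatible transducer becomes unique up to $\Up$-equivalence. The paper does \emph{not} argue this way, and it is not clear your route can be completed as stated: even once $\equiv_T$ has stabilised on the Nerode congruence of $\tau$, the wildcard entries outside $\Up$ still leave freedom in how a minimal merging map assigns outputs, and there is no a~priori bound on how many distinguishing words $u\in\Up$ might be needed to kill off competing candidates unless you already know the candidate space is finite. In other words, your saturation claim presupposes precisely the finiteness you are trying to establish.

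The paper instead proves termination by a direct counting argument (Proposition~\ref{prop:AlgoSizenFinite}): for each fixed state count $n$, only finitely many transducers can ever arise as candidates, because a merging map on $T$ is determined by (i) a choice of $n$ representatives from a prefix-closed set, (ii) a transition skeleton $Q\times\Sigma\to Q$, (iii) a ternary open/muted/neither label on each transition, and (iv) outputs that are either forced to $\varepsilon$ (muted/open) or bounded in length by the observed values via rule~3 of Definition~\ref{def:MergingMap}. Each $\mathrm{CompetingMinGen}$ call eliminates at least one such candidate, and $n$ ranges only over $1,\dots,n_0$, so the total number of calls is finite. This is where the muted/open discipline earns its keep: it is what makes the output component of the search space finite, which is the transducer-specific obstacle you correctly identify but do not resolve.
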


\begin{proof}[Sketch]
We first assume termination and focus on the bound on equivalence queries. Note that, by construction of the tables, any minimal $\calM$ such that $\sem{\calM}_{|\Up} = \tau$ is compatible with all of them. Thus, it suffices to argue that every equivalence query our algorithm poses increases the size of a minimal transducer compatible with it.
For termination, it remains to bound the number of membership queries and calls to the subprocedures. Unlike for automata, it is impossible to enumerate all transducers with $n$ states compatible with an observation table. Termination will follow from the fact that we do not consider the all such transducers. Rather, we enumerate a finite subset of them (see Prop.~\ref{prop:AlgoSizenFinite}).\qed
\end{proof}


%

%

\section{Merging maps to guess a minimal transducer}\label{sec:TransducersUpOracles}
Algorithm~\ref{proc:algoUpTransducer} relies on the two subprocedures CompetingMinGen(T) and MinGen(T)
to find one or several competing transducers compatible with an observation table.
This type of procedures is absent from blackbox learning algorithms,
but central to graybox learning algorithm~\cite{leucker2012learning}.
In the automata case, an oracle that guesses a minimal compatible automaton needs only to guess an equivalence relation on $P_T$
For transducers, we guess a function $f$ that associates to each element of $P_T$ an output in $\Gamma^*$.
Since this is not a finite search space, we aim to restrict ourselves to a finite subspace
that still allows us to find one unique or two non-equivalent minimal candidates,
when they exist.
We will limit the scope of this search with Definition~\ref{def:MutedTransition} and~\ref{def:OpenTransition}
of \emph{muted} and \emph{open} transitions, to fix arbitrary outputs at $\varepsilon$.

To combine the two subprocedures CompetingMinGen($T$) and MinGen($T$),
we characterize a necessary and sufficient condition under which there exist two possible minimal candidates,
tested by CompetingMinGen($T$).
When we ensure that the minimal candidate is unique up to equivalence on $\Up$,
we use MinGen($T$) to generate it, then send an equivalence query.
%

\paragraph{MinGen(T) using MMs.}
Recall that there are transducers compatible with a table $T$ that do not result from a MM on $T$.
We will show that to implement MinGen($T$) and CompetingMinGen($T$),
it is enough to focus on minimal MMs
and to take the resulting transducers as candidates.
To justify that this method provides the right result, we prove that it provides valid candidates.
%
%
%
%
%
%
%
%
\begin{lemma}\label{lem:resultingtransduceralignment}
Let $(f,\equiv)$ be a minimal MM on a table $T$ and $\calM$ its resulting transducer. Then, $\calM$ is compatible with $T$.
\end{lemma}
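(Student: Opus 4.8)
The plan is to analyze the unique run of the resulting transducer $\calM$ on each word $x\in\dom(T)$ and show it produces exactly the value prescribed by $T$. First I would record three well-definedness facts that follow directly from the merging-map axioms. By rule~3, $\dom(f)$ is prefix-closed, so (assuming $\dom(f)\neq\emptyset$; the empty case gives a stateless $\calM$ and is vacuously compatible, since by rule~2 no $T(x)$ then lies in $\Gamma^*$) we have $\varepsilon\in\dom(f)$ and the initial state $q_\varepsilon$ is well-defined. Rule~4 makes the transition relation deterministic: whenever $u\equiv u'$ and $ua,u'a\in\dom(f)$ we get $ua\equiv u'a$ and $f(u)^{-1}f(ua)=f(u')^{-1}f(u'a)$, so the outgoing transition of a state on a letter does not depend on the representative chosen. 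Rule~5 likewise makes the final function $\delta_F(q_u)=f(u)^{-1}T(u)$ well-defined on equivalence classes.

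The crux is a run-tracking lemma, proved by induction on the length of the prefix read: for every $x\in P_T$, the run of $\calM$ on $x$ reaches $q_x=[x]$ with partial output $f(\varepsilon)^{-1}f(x)$ if $x\in\dom(f)$, and is undefined otherwise. In the inductive step, when $x_1\cdots x_{i+1}\in\dom(f)$ the transition out of $[x_1\cdots x_i]$ on $x_{i+1}$ is forced to $[x_1\cdots x_{i+1}]$ with output $f(x_1\cdots x_i)^{-1}f(x_1\cdots x_{i+1})$ by rule~4, and telescoping gives the claimed partial output $f(\varepsilon)^{-1}f(x_1\cdots x_{i+1})$. The delicate point, which I expect to be the main obstacle, is ruling out ``wandering'': I must show that when $x_1\cdots x_{i+1}\notin\dom(f)$ there is no transition at all out of $[x_1\cdots x_i]$ on $x_{i+1}$. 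If one existed it would arise from some representative $w\equiv x_1\cdots x_i$ with $wx_{i+1}\in\dom(f)$; rule~4 would then force $wx_{i+1}\equiv x_1\cdots x_{i+1}$, and since $f$ is defined on the whole class containing $wx_{i+1}$ (rule~1), we would obtain $x_1\cdots x_{i+1}\in\dom(f)$, a contradiction. Hence the run follows the prefixes of $x$ and stalls precisely when a prefix leaves $\dom(f)$.

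With this lemma in hand, compatibility follows by a two-case analysis for $x\in\dom(T)\subseteq P_T$. If $T(x)\in\Gamma^*$, then $x\in P_\Gamma\subseteq\dom(f)$ by rule~2, the run reaches $q_x$, and $q_x$ is final with $\delta_F(q_x)=f(x)^{-1}T(x)$ (using $x$ itself as representative); since $f(\varepsilon)$ is a prefix of $f(x)$ (rule~3) and $f(x)$ a prefix of $T(x)$ (rule~2), the computation gives $\sem{\calM}(x)=f(\varepsilon)\cdot f(\varepsilon)^{-1}f(x)\cdot f(x)^{-1}T(x)=T(x)$. If $T(x)=\bot$, then either $x\notin\dom(f)$, so the run is undefined and $x\notin\dom(\sem\calM)$, or $x\in\dom(f)$, so the run reaches $q_x$, which cannot be final: a representative $y\equiv x$ with $T(y)\in\Gamma^*$ would force $T(x)\neq\bot$ by rule~5, contradicting $T(x)=\bot$. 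In both situations $x\notin\dom(\sem\calM)$, which is exactly what compatibility demands. This establishes both clauses for every $x\in\dom(T)$. I note that the argument invokes only rules~1--5, so it in fact holds for an arbitrary merging map, with rule~6 and minimality governing the canonicity of the construction rather than compatibility itself.
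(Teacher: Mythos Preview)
Your proof is correct and follows essentially the same approach as the paper: an inductive run-tracking lemma showing that the run on $x\in P_T$ reaches $q_x$ with partial output governed by $f$, followed by the same two-case analysis on $T(x)\in\Gamma^*$ versus $T(x)=\bot$ using rule~5. Your treatment is in fact more careful than the paper's on two points: you handle the initial production $f(\varepsilon)$ explicitly (the paper writes $q_\varepsilon\xrightarrowstar{u\mid f(u)}q_u$, tacitly folding $w_0$ into the partial output), and you spell out the ``no wandering'' argument via rules~1 and~4, which the paper leaves as a ``simple induction''. Your closing remark that only rules~1--5 are used, so minimality is irrelevant to compatibility, is correct and worth keeping---indeed the paper later invokes this lemma for a merging map that is not assumed minimal.
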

%
Among the minimal transducers compatible with $T$, there is one resulting from a MM.
Indeed, from a transducer $\calM$ compatible with $T$ one can create a smaller one using the MM induced by $\calM$ and Definition~\ref{def:restr}.
%
\begin{proposition}
\label{prop:MinimalCorrespondingTable}
Let $T$ be a table, $\calM$ a transducer compatible with $T$. There is a transducer $\calM'$, with as many states, compatible with $T$ resulting from a MM.
\end{proposition}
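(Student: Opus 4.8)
The plan is to take the transducer $\calM$ compatible with $T$, extract its induced MM $(\equiv,f)$ via Definition~\ref{def:InducedMergingMap}, and then build $\calM'$ as the transducer resulting from $(\equiv,f)$ via Definition~\ref{def:restr}. By Lemma~\ref{lem:resultingtransduceralignment} (modulo checking that the induced MM is in fact a minimal MM, or at least a valid MM), the resulting transducer $\calM'$ is compatible with $T$. So the real content of the proposition is the \emph{state count}: I must show that $\calM'$ has at most as many states as $\calM$, and in fact ``as many'' should be read as ``no more than.'' First I would verify that $(\equiv,f)$ really satisfies all six conditions of Definition~\ref{def:MergingMap}, which amounts to translating the defining properties of a transducer compatible with $T$ (routine but requires care with the muted-transition rule~6 and the prefix conditions~2,3).

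The key structural observation is that the number of states of $\calM'$ equals the number of equivalence classes of $\equiv$ restricted to $\dom(f)$, while $\equiv$ is defined by ``$u\equiv v$ iff $u$ and $v$ reach the same state of $\calM$.'' Thus each $\equiv$-class is contained in the set of words reaching a single fixed state of $\calM$, so distinct $\equiv$-classes map to distinct states of $\calM$. This gives an injection from the states of $\calM'$ into the states of $\calM$, yielding $|\calM'|\le|\calM|$. The remaining subtlety is that $\calM'$ only has states for classes \emph{inside} $\dom(f)$, i.e. only for words $u$ for which $f(u)$ exists; words landing in states of $\calM$ that contribute no corroborated output information simply do not produce a state. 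Since $\calM$ is trim, every state is co-reachable from a final state, but a state may still fail to be represented in $P_T$ at all or may lie entirely in $P_\emptyset$; either way this only \emph{removes} states, never adds them, which is consistent with the ``$\le$'' direction.

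The step I expect to be the main obstacle is reconciling the claim ``with as many states'' with the construction, since the resulting transducer can genuinely be \emph{smaller} (the paper itself stresses this, e.g. a table full of $\#$ yields no final states and hence collapses). I would therefore read the proposition as asserting existence of $\calM'$ compatible with $T$, resulting from a MM, with $|\calM'|\le|\calM|$, and argue that this is exactly what is needed downstream: it guarantees that searching only among MM-resulting transducers loses nothing when hunting for a \emph{minimal} compatible transducer. To make the injection rigorous I would pick, for each state $q$ of $\calM'$ (an $\equiv$-class $C\subseteq\dom(f)$), a representative $u_C\in C$, and observe $u_C\mapsto(\text{state of }\calM\text{ reached by }u_C)$ is well-defined (by definition of $\equiv$) and injective (distinct classes reach distinct states). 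The one thing to double-check is that this map is genuinely into the states of the trim $\calM$ and that no two distinct classes could be forced to collapse to the same state of $\calM'$ while reaching different states of $\calM$ — but condition~(A) of the induced MM rules precisely this out, so the argument closes cleanly.
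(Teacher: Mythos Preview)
Your proposal is correct and follows essentially the same route as the paper: take the MM induced by $\calM$, verify the six conditions of Definition~\ref{def:MergingMap}, invoke Lemma~\ref{lem:resultingtransduceralignment} for compatibility, and bound the state count via the injection of $\equiv$-classes in $\dom(f)$ into states of $\calM$. Your reading of ``as many states'' as $|\calM'|\le|\calM|$ and your observation that the ``minimal'' hypothesis in Lemma~\ref{lem:resultingtransduceralignment} is not actually used both match what the paper does.
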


\paragraph{CompetingMinGen(T) using MMs.} 
While guessing a MM is enough to guess a minimal transducer, it does not provide a reliable
way to decide whether two non-equivalent minimal compatible transducers exist.
For the subroutine CompetingMinGen($T$), we must find a way to detect whether this is the case.
A natural first step is to say that if we can find minimal MMs
whose resulting transducers are non-equivalent on $\Up$, then we have found a solution to CompetingMinGen($T$). Unfortunately, this condition is not necessary.
%
Indeed, 
there are minimal MM
induced by several non-equivalent transducers.
This 
arises when a transition going out of the state associated to some $u\in P_T$
can have an arbitrarily defined output, because $ua\in P_\emptyset$, or $ua\not\in P_T$.

\begin{example}\label{ex:MutedOpen}
In Figure~\ref{fig:mergingnotbij}, we note the special case of two transitions in the left transducer: the transition $q_a\xrightarrow{c|1}q_a$ linking $a\in P_\Gamma$ to $ac\in P_\varepsilon$, and the transition $q_b\xrightarrow{a|\varepsilon}q_a$ linking $b\in P_\Gamma$ to $ba\notin P_T$.
In both cases, the transition is never used by any $u\in P_T$ such that $T(u)\in\Gamma^*$.
In other words, those transitions could be deleted, or their output arbitrarily changed for any $w\in\Gamma^*$,
without breaking compatibility.
The right transducer is also compatible with $T$, but the output of
$q_a\xrightarrow{c|1}q_a$ 
has been changed to $\varepsilon$
and the transition $q_b\xrightarrow{a|\varepsilon}q_a$ has been deleted.
\end{example}

The first case, $ua\in P_\emptyset$, is the one we aimed to eliminate by erasing the output in muted pairs $(u,a)$.
We call muted transitions those whose output has to be $\varepsilon$
in a transducer induced from a MM.
\begin{definition}\label{def:MutedTransition}
Let $T$ be a table, $(\equiv,f)$ a MM, and $\calM$ its resulting transducer.
For all $u\in P_T$, $a\in\Sigma$, $(u,a)$ is a \emph{muted pair} of $(\equiv,f)$,
and $q_u\xrightarrow{a|\varepsilon}q_{ua}$ is a \emph{muted transition} of $\calM$,
if $u,ua\in\dom(f)$ but
there is no $v\in P_T$ such that $u\equiv v$ and $va\in P_\Gamma$.
\end{definition}

The second case, $ua\not\in P_T$, is new. We formalize this notion
as open ends on a MM.
An open end is a place where a transition could be added
without influencing the behaviour of the resulting transducer on $P_T$.
We decide to fix the output of such transitions to $\varepsilon$. 
%
%
\begin{definition}\label{def:OpenTransition}
Let $T$ be a table and $(\equiv,f)$ a MM.
For all $u\in P_T$, $a\in\Sigma$, $(u,a)$ is an \emph{open end} of the map if
there exists no $v\in P_T$ such that $v\equiv u$ and $va\in P_T$.

Let $\calM$ be the resulting transducer of $(\equiv,f)$.
We say that $\calM'$ is an \emph{open completion} of $(\equiv,f)$ (or of $\calM$)
if it is the transducer $\calM$ with at most one additional transition
$u\xrightarrow{a|\varepsilon}u'$ per open end $(u,a)$.
We call such transitions \emph{open transitions}. 
\end{definition}

Muted and open transitions allow arbitrary output,
meaning there can exist several non-equivalent compatible transducers:
if there exists a word $u\in\Up$ that goes through a muted transition,
that is sufficient to build several compatible transducers that give different outputs on $u$.
This condition together with the existence of competing minimal MMs
give a necessary, sufficient and effective, condition for CompetingMinGen($T$).

%

\begin{lemma}\label{lem:ArbitraryTransducer}
Let $T$ be an observation table, $(\equiv,f)$ a MM on $T$ and $\calM$ its resulting transducer.
If there exists an open completion $\calM'$ and an element $u\in\Up$ such that
$u\in\dom(\sem{\calM'})$ and $u$ uses a muted or open transition in its run in $\calM'$,
then there exist competing minimal transducers compatible with $T$.
\end{lemma}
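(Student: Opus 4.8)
The plan is to produce two minimal transducers that are compatible with $T$ yet disagree on the given word $u\in\Up$, so they are competing. Let $t=q\xrightarrow{a|\varepsilon}q'$ be the muted or open transition that the run of $u$ uses in $\calM'$, let $\calM_1:=\calM'$, and let $\calM_2$ be $\calM'$ with the output of $t$ relabelled from $\varepsilon$ to any fixed letter $c\in\Gamma$. Neither operation changes the state set, so $\abs{\calM_1}=\abs{\calM_2}=\abs{\calM}$, and by Lemma~\ref{lem:resultingtransduceralignment} and Proposition~\ref{prop:MinimalCorrespondingTable} this common value is the size of the minimal MM $(\equiv,f)$, hence the least number of states of any transducer compatible with $T$; so once compatibility is established, both are minimal. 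Relabelling a single output never affects which words reach a final state, so $\dom(\sem{\calM_2})=\dom(\sem{\calM_1})$ and thus $u\in\dom(\sem{\calM_2})$ too; since the run of $u$ traverses $t$ at least once and $t$ is the only transition whose label differs, $\abs{\sem{\calM_2}(u)}>\abs{\sem{\calM_1}(u)}$, whence $\sem{\calM_1}(u)\neq\sem{\calM_2}(u)$. It remains to show $\calM_1$ and $\calM_2$ are compatible with $T$.

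The heart of the argument is that a muted or open transition carries ``free'' output: it is never traversed by the run of a table word carrying actual output. Consider $w\in(P\cup P\Sigma)\cdot S$ with $T(w)\in\Gamma^*$. Since $\calM$ is compatible (Lemma~\ref{lem:resultingtransduceralignment}), $w\in\dom(\sem\calM)$, so its run in $\calM$ completes using only transitions of the resulting transducer; by Definition~\ref{def:restr} every prefix $p$ of $w$ then lies in $\dom(f)$ and reaches the state $q_p$. Suppose this run met a muted transition after a prefix $p$ while reading $a$; then $q_p$ equals the source of the muted pair $(u_0,a)$, so $p\equiv u_0$, and $pa$ is a prefix of $w$, hence $pa\in P_T$ by prefix-closedness. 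Writing $w=pa\cdot z$ with $T(w)\in\Gamma^*$ shows $pa\in P_\Gamma$, contradicting Definition~\ref{def:MutedTransition} (no $v\equiv u_0$ has $va\in P_\Gamma$). So runs of recorded words avoid all muted transitions in $\calM$.

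Next I would check that passing to the open completion $\calM_1=\calM'$ preserves compatibility. Determinism means an open transition is added only at a pair $(u_0,a)$ where $\calM$ had no $a$-transition out of $q_{u_0}$; hence the run of any recorded word $w$ (which never gets stuck in $\calM$) uses the same transitions in $\calM'$ as in $\calM$, giving $\sem{\calM'}(w)=\sem\calM(w)=T(w)$. For a $\bot$-word $w$ we have $w\notin\dom(\sem\calM)$, and an open transition could only pull $w$ into the domain by unsticking a run that halts in $\calM$ after some prefix $p$ on letter $a$; but then $q_p$ is the source of an open end $(u_0,a)$, so $p\equiv u_0$ and $pa\in P_T$, contradicting Definition~\ref{def:OpenTransition}. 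Thus $\calM'$ adds no $\bot$-word to its domain and is compatible. Finally, whether $t$ is muted or open, the runs of recorded words in $\calM'$ coincide with their runs in $\calM$ and hence avoid $t$ (by the previous paragraph for the muted case, and because open transitions are absent from $\calM$ for the open case); relabelling $t$ therefore leaves every recorded output and the whole domain unchanged, so $\calM_2$ is compatible as well. Together with minimality and the disagreement on $u$, this yields two competing minimal compatible transducers.

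I expect the main obstacle to be the ``free output'' claim of the second paragraph and the domain-preservation of the third: both hinge on the precise conditions in Definitions~\ref{def:MutedTransition} and~\ref{def:OpenTransition} and on the prefix-closedness of $P_T$, which is exactly what turns a hypothetical traversal of $t$ by a table word into an element of $P_T$ (respectively $P_\Gamma$) witnessing a contradiction. A small but essential supporting fact, to be stated explicitly before it is used, is that in the resulting transducer the run of a prefix $p\in\dom(f)$ reaches $q_p$, so that meeting $t$ forces the equivalence $p\equiv u_0$.
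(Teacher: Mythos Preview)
Your argument is correct and takes a genuinely different route from the paper's. The paper builds the second competitor not by relabelling the output of $t$ but by \emph{removing} all muted transitions from $\calM$ to obtain a transducer $\calM''$; since $t$ (whether muted or open) is absent from $\calM''$, one gets $u\in\dom(\sem{\calM'})$ while $u\notin\dom(\sem{\calM''})$, so the disagreement is in the \emph{domain} rather than in the output value. That route invokes two auxiliary lemmas (that deleting or relabelling a muted transition, and that adding a transition at an open end, preserve compatibility) as black boxes, whereas your second and third paragraphs essentially reprove their content inline. Your version is therefore more self-contained; the paper's is shorter and sidesteps the mild side assumption that $\Gamma$ is nonempty, which you need in order to pick the letter $c$.

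One point worth flagging, common to both arguments: you write ``the size of the minimal MM $(\equiv,f)$'', treating the given merging map as minimal, but the lemma's hypothesis only says ``a MM on $T$''. The paper's own proof makes the same silent leap when it concludes that $\calM'$ and $\calM''$ are \emph{minimal} competing transducers. In context (Proposition~\ref{prop:CompetingNSC} and the description of CompetingMinGen) the MM is indeed always taken minimal, so this is a statement-level imprecision rather than a flaw in your reasoning; still, it would be cleaner to state the assumption explicitly rather than import it via the cited proposition.
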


%
\paragraph{Implementation:} We prove that the following is a possible implementation of CompetingMinGen($T$):
\begin{enumerate*}[label=(\arabic*)]
    \item search for two minimal MMs with non-equivalent corresponding transducers,
    \item if these do not exist, search for a minimal MM and an open completion as in
    Lemma~\ref{lem:ArbitraryTransducer};
    \item otherwise, we have a unique minimal transducer up to equivalence on $\Up$.
\end{enumerate*}

\begin{proposition}\label{prop:CompetingNSC}
Let $T$ be a table. 
If there exist two minimal transducers $\calM_1$ and $\calM_2$ compatible with $T$ but not equivalent on $\Up$,
one of the following exists:
\begin{enumerate*}[label=(\roman*)]
    \item two minimal MMs with non-equivalent resulting transducers $\calM'_1,\calM'_2$, or
    \item an open completion $\calM'$ of a minimal MM compatible with $T$
    and a word $u\in\dom(\sem{\calM'})\cap\Up$ using at least one open or muted transition of $\calM'$.
\end{enumerate*}
\end{proposition}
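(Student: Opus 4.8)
The plan is to start from the two given minimal transducers, replace each by an equivalent-on-$P_T$ transducer that results from a merging map, and then split on whether these MM-projections already witness case (i). Concretely, I would first apply Proposition~\ref{prop:MinimalCorrespondingTable} to $\calM_1$ and $\calM_2$: from each $\calM_i$ I obtain its induced MM $(\equiv_i,f_i)$ (Definition~\ref{def:InducedMergingMap}) and the resulting transducer $\calM'_i$ (Definition~\ref{def:restr}), which is compatible with $T$ and has the same number of states as $\calM_i$. Since $\calM_1,\calM_2$ are minimal among compatible transducers, each $\calM'_i$ is minimal as well, so $(\equiv_i,f_i)$ are \emph{minimal} MMs. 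Now I dichotomize on whether the resulting transducers stay apart: if $\sem{\calM'_1}_{|\Up}\neq\sem{\calM'_2}_{|\Up}$, then $(\equiv_1,f_1)$ and $(\equiv_2,f_2)$ are exactly the two minimal MMs with non-equivalent resulting transducers required by case (i), and we are done. Otherwise $\sem{\calM'_1}_{|\Up}=\sem{\calM'_2}_{|\Up}$, and I aim for case (ii).

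In that remaining case, fix a word $u\in\Up$ witnessing $\sem{\calM_1}_{|\Up}\neq\sem{\calM_2}_{|\Up}$ (either a disagreement in produced output, or in domain membership). Since $\calM'_1$ and $\calM'_2$ agree on $u$, the two equalities $\sem{\calM_i}(u)=\sem{\calM'_i}(u)$ cannot both hold; without loss of generality $\calM_1$ and $\calM'_1$ disagree on $u$. The heart of the argument is a \emph{locality} claim: a compatible transducer $\calM_1$ and the transducer $\calM'_1$ resulting from its induced MM agree on every word of $\Up$ whose accepting run uses only transitions that are neither muted (Definition~\ref{def:MutedTransition}) nor sit at an open end (Definition~\ref{def:OpenTransition}). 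This is because, by Definition~\ref{def:InducedMergingMap}, $f_1$ records the true output of $\calM_1$ precisely on the non-muted transitions, while on the corroborated final states compatibility with $T$ forces the final outputs to coincide; the only transitions of $\calM_1$ not reproduced by $\calM'_1$ are those at open ends. Contrapositively, the disagreement on $u$ forces its run to traverse a muted or open transition.

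It then remains to exhibit the open completion demanded by case (ii). I would take $\calM'$ to be the open completion of $(\equiv_1,f_1)$ obtained by adding, for each open end used along the run of $u$ in $\calM_1$, the single $\varepsilon$-output transition mimicking the corresponding step of $\calM_1$; Definition~\ref{def:OpenTransition} permits one transition per open end, and determinism of $\calM_1$ makes this choice unambiguous. By construction $u\in\dom(\sem{\calM'})\cap\Up$ and its run uses at least one open or muted transition, which is exactly the object required, and this matches the hypothesis of Lemma~\ref{lem:ArbitraryTransducer} in the converse direction.

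I expect the main obstacle to lie in the open-completion step rather than the dichotomy: one must verify that the states targeted by the added open transitions are genuinely states of $\calM'_1$, i.e. equivalence classes lying in $\dom(f_1)$, so that the reconstructed run actually reaches a final state and certifies $u\in\dom(\sem{\calM'})$. This relies on tracking an accepting run of $u$ in $\calM_1$ against $P_T$ and on the fact that final-reaching behaviour is confined to represented classes. Establishing this carefully, together with the locality claim above (which is a somewhat tedious unfolding of Definitions~\ref{def:restr} and~\ref{def:InducedMergingMap} against the compatibility conditions), is where the real work lies; the rest of the proof is bookkeeping over the two kinds of disagreement (output versus domain membership).
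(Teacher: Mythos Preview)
Your proposal is correct and follows essentially the same line as the paper's proof: pass from $\calM_1,\calM_2$ to their induced merging maps, take the resulting transducers $\calM'_1,\calM'_2$ (which are minimal since $|\calM'_i|\le|\calM_i|$ and the $\calM_i$ are minimal), and dichotomize on whether $\calM'_1,\calM'_2$ remain non-equivalent on $\Up$; if they do, case~(i) holds, and otherwise one of the pairs $(\calM_i,\calM'_i)$ disagrees on $\Up$, from which a witness for case~(ii) is extracted.

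The only noteworthy difference is in how the open completion is produced in case~(ii). You build it \emph{bottom-up}: start from $\calM'_1$ and, for each open end traversed by the run of $u$ in $\calM_1$, add a single $\varepsilon$-output transition pointing where $\calM_1$ goes; this is exactly what forces you to check that those target states are classes lying in $\dom(f_1)$. The paper instead goes \emph{top-down}: it takes $\calM_1$ itself and sets the output of every muted and every open-end transition to $\varepsilon$, obtaining a transducer $\calM''$ with the same state set and underlying transition graph as $\calM_1$. Because minimality forces the induced MM to have exactly $|\calM_1|$ classes, every state of $\calM_1$ already \emph{is} a class of the MM, so $\calM''$ is directly an open completion of $\calM'_1$, and $u$ runs in $\calM''$ along the same path it took in $\calM_1$. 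This manoeuvre dissolves precisely the obstacle you flagged, but the two constructions yield the same object and your argument is otherwise the paper's.
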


\section{Encoding into string equations}

To align our result with other graybox algorithms~\cite{leucker2012learning,abel2016gray},
we wish to encode the minimal generation subroutines into an \NP{} problem like SAT.
While a direct encoding is possible, it is easier to go through a first encoding into string equations.
We only use operations that are easily encoded into SAT: 
word equality and inequality, concatenation, first order operators,
and a restricted use of quantifiers. Universal quantifiers are limited to 
sets of polynomial size,
and as such are a shortcut for a conjunction,
and existential quantifiers are limited to 
words of polynomial size, that can be encoded using a polynomial number of variables.

This setting has the advantage of being more directly relevant to the notions we consider,
while keeping the \NP{} theoretical bound.
Furthermore, SMT solvers have specialized tools~\cite{zheng2013z3,liang2016efficient} to solve such equations,
that may yield better practical results than direct SAT encoding.

We encode an observation table $T$, merging maps $(\equiv,f)$,
and runs of $u\in\Up$ with output $w\in\Gamma^*$ in the resulting transducer of $T$.
We use word variables $T_u$ for $T(u)$,
booleans $E_{u,v}$ for $u\equiv v$,
word variables $f_u$ for $f(u)$,
word variable $u$ and letter variables $a_i\in[1,k]$ with $u=a_1\cdot \dots \cdot a_k$ for an input word of length $k$,
and word variables $w=w_0\cdot w_1\cdot\dots\cdot w_k\cdot w_{k+1}$ for their output step by step.
The bounds on the size of $u$ is given by small model theorems in automata and transducers.

We use string equation formulae to encode the properties we combine in the minimal generation subroutines.
We classically build $\phi_{eq}$ that ensures $E_{u,v}$ denotes an equivalence.
Then, each point of Definition~\ref{def:MergingMap} can be seen as a simple combination of string equations
on $T_u$ and $f_u$ using the binary variables $E_{u,v}$ for $u,v\in P_T$.
We can thus build $\phi_{mm}$ that ensures $E_{u,v}$ and $f_u$ denote a MM.

For the transducer resulting from $(\equiv,f)$, and its open completions, we add booleans $m_{u,a}$,  $o_{u,a}$
that indicate leaving $q_u$ with $a$ is a muted or open transition.

To model runs, we use $\phi_{run}(u,w)$ ensuring $u$ has a run with output $w$
in the transducer resulting from $E_{u,v}$ and $f_u$.
We build it by making sure the run starts in the initial state with production $w_0$,
ends in a final state with production $w_{k+1}$, and at the $i^\text{th}$ step,
the input letter is $a_i$ and the output word is $w_i$.

To encode MinGen($T$) we only need to find $E_{u,v},f_u$ that respect $\phi_{mm}$ with $n$ states,
where $n$ starts at $1$ and increases until a solution is found.

For CompetingMinGen($T$) we use Proposition~\ref{prop:CompetingNSC} to split the encoding in two.
To encode the case where there exist two non-equivalent MMs,
we use variables $E_{u,v}$ and $f_u$ respecting $\phi_{mm}$ for a first MM, 
copies $E'_{u,v}$ and $f'_u$ respecting $\phi_{mm}$ for a second MM, 
and $\phi_{\Up}$ and $\phi_{run}$ to encode the existence of $u\in\Up$
whose run differs in the transducers resulting from both MMs.

It is easy to encode the case where there exist an open completion and a word $u\in\Up$ that uses an open or muted transition,
by using $m_{u,a}$ and $o_{u,a}$ on the run of $u$ in the transducer resulting from the MM of $E_{u,v}$ and $f_u$.

Combined together, they encode the minimal generating subroutines in string equations,
that could then be encoded in SAT, leading to our result:
\begin{proposition}\label{prop:EncodeMinimalGenerators}
    Let $T$ be an observation table.
    The subroutines MinGen($T$) and CompetingMinGen($T$) can be effectively implemented.
\end{proposition}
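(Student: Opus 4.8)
The plan is to establish Proposition~\ref{prop:EncodeMinimalGenerators} by exhibiting, for each subroutine, an explicit string-equation formula of polynomial size whose satisfying assignments are in correspondence with the objects the subroutine must find, and then invoking the small-model theorems to bound the sizes of all quantified words so that the formula lives in the restricted fragment described in the text (word (in)equality, concatenation, first-order operators, universal quantifiers over polynomial-size index sets, and existential quantifiers over polynomially-bounded words). Since every ingredient formula $\phi_{eq}$, $\phi_{mm}$, $\phi_{run}$, and $\phi_{\Up}$ has already been sketched, the work is to assemble them correctly and to certify the two correspondences.

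First I would handle MinGen($T$). For a fixed target number of states $n$, I would write a formula asserting that $(E_{u,v})_{u,v\in P_T}$ and $(f_u)_{u\in P_T}$ satisfy $\phi_{eq}\wedge\phi_{mm}$ and that $\equiv$ restricted to $\dom(f)$ has exactly $n$ classes (encodable by a polynomial-size constraint selecting $n$ representatives). By Lemma~\ref{lem:resultingtransduceralignment} and Proposition~\ref{prop:MinimalCorrespondingTable}, a satisfying assignment yields a minimal MM whose resulting transducer is compatible with $T$, and conversely a minimal compatible transducer gives (via Definition~\ref{def:InducedMergingMap}) an assignment; iterating $n=1,2,\dots$ until satisfiable returns a genuinely minimal compatible transducer. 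The correctness of this step rests entirely on the earlier equivalence between minimal MMs and minimal compatible transducers, so little new argument is needed.

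Next I would treat CompetingMinGen($T$), using Proposition~\ref{prop:CompetingNSC} to justify a disjunction of two encodings at the fixed minimal size $n$ found above. For case~(i) I would take two copies of the MM variables, $(E_{u,v},f_u)$ and $(E'_{u,v},f'_u)$, each constrained by $\phi_{mm}$ to be an $n$-state MM, together with $\phi_{\Up}(u)\wedge\phi_{run}(u,w)\wedge\phi_{run}'(u,w')\wedge w\neq w'$ to witness a word of $\Up$ on which the two resulting transducers disagree; the existential word $u$ is bounded by the transducer small-model theorem. For case~(ii) I would use a single MM together with the booleans $m_{u,a}$ and $o_{u,a}$ flagging muted and open transitions (themselves definable from $E_{u,v}$, $f_u$, and membership of successors in $P_\Gamma$), and assert the existence of $u\in\Up$ whose accepting run in an open completion traverses an edge flagged muted or open; this is exactly the situation of Lemma~\ref{lem:ArbitraryTransducer}. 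Proposition~\ref{prop:CompetingNSC} guarantees that the disjunction of these two formulae is satisfiable precisely when two minimal compatible transducers inequivalent on $\Up$ exist, which is the specification of CompetingMinGen($T$).

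The main obstacle I anticipate is not the logical assembly but the bounding step: to stay in the advertised \NP{} fragment, the existentially quantified input word $u$ (and its step-by-step output decomposition $w=w_0w_1\cdots w_{k+1}$) must have length polynomial in the table and in $n$. This is where the small-model theorems for automata and transducers are essential, and I would need to state them precisely and verify that a distinguishing or transition-traversing word can always be taken short --- intuitively, if two $n$-state transducers differ on $\Up$, or if some word of $\Up$ uses a muted/open transition, then a shortest such witness has length bounded by a polynomial in the product of the state spaces (including $\Aup$). Once this length bound is in hand, each existential word is encoded by polynomially many letter variables and each universal quantifier ranges over the polynomial-size set $P_T$, so the whole formula is a polynomial-size string equation, which is the claim.
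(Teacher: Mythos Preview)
Your proposal follows essentially the same approach as the paper: encode a merging map of size $n$ via $\phi_{eq}\wedge\phi_{mm}$, iterate $n$ upward for MinGen, and for CompetingMinGen split along Proposition~\ref{prop:CompetingNSC} into (i) two MMs with a distinguishing word in $\Up$ and (ii) one MM with an open completion and a word of $\Up$ traversing a muted or open transition, with small-model theorems bounding the witness length.

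One small gap worth flagging: in your case~(i) the formula $\phi_{run}(u,w)\wedge\phi_{run}'(u,w')\wedge w\neq w'$ only detects a word $u$ accepted by \emph{both} resulting transducers with distinct outputs. It misses the case where $u\in\Up$ lies in the domain of one transducer but not the other, which is also a legitimate non-equivalence witness. The paper handles this by adding a second disjunct using a formula $\phi'_{\lnot run}(u)$ asserting that $u$ fails in the second transducer (either by a missing transition or by ending in a non-final state). Without this disjunct your encoding of case~(i) is incomplete. The fix is routine once noticed, and otherwise your assembly matches the paper's proof.
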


\paragraph{Note on complexity:} As this string-constraints encoding is a polynomial shortcut for a SAT encoding, each oracle call 
solves an \NP{} problem.
Coarsely, MinGen($T$) and CompetingMinGen($T$) are of complexity $ \P^{\NP}$.
To find a minimal MM of size $n$, we need $n-1$ of those oracles to fail on sizes $1\leq i < n$.

If we take Algorithm~\ref{proc:algoUpTransducer} in its entirety, then it is not each call
to MinGen($T$) and CompetingMinGen($T$) that needs $n$ of those oracles to fail to find a minimum of size $n$,
as we can keep the memory of the current minimal size for future calls.
\section{Conclusion}

Adapting graybox learning to transducers revealed more complex than expected. Our solution relies on merging maps, muted and open transitions while offering
better bounds on equivalence queries than OSTIA.
%
Two main questions remain open:
\begin{enumerate*}[label=(\arabic*)]
\item The bound on the number of equivalence queries was the aim of this paper, but the number of membership queries
or call to string equations solvers are not considered.
Providing tight bounds or proposing a potential tradeoff, like the one described in~\cite{abel2016gray},
would increase the viability of the implementation of such an algorithm.
\item We could consider other classes of  side information like
general upper bound that cut sections of $\Sigma^*\times\Gamma^*$.
\end{enumerate*}

\bibliographystyle{abbrv}
\bibliography{papers}
\clearpage
\newpage

\appendix

\begin{center}
{\Large Appendix}
\end{center}

\section{Proof of Theorem~\ref{thm:guar-tr}}
\begin{proof}
We first assume termination and focus on the bound on equivalence queries. Note that, by construction of the tables, any minimal $\calM$ such that $\sem{\calM}_{|\Up} = \tau$ is compatible with all of them. Thus, it suffices to argue that every equivalence query our algorithm poses increases the size of a minimal transducer compatible with it.
Let $T$ be the table and $\calM$ be the input transducer in an equivalence query, and $T'$ the table and $\calM'$ the transducer in the next one. We prove that $\calM'$ has at least one more state than $\calM$.

We know two things about $\calM'$.
First, it is compatible with $T$
as every observation of $T$ is still present in $T'$ and $\calM'$ is compatible with $T'$.
Second, it is not equivalent to $\calM$ on $\Up$ since $\calM$ was sent as candidate for an equivalence query that failed. This means the teacher provided a counterexample $u\in\Up$ that $\calM$ disagreed with.
This new $u$ is added to the table to obtain $T'$. Hence, $\calM$ is not compatible with $T'$, while $\calM'$ is and they differ at least on $u\in\Up$.

We now prove that $\calM'$ has strictly more states than $\calM$.
Since $\calM$ was sent as candidate for an equivalence query, that means
CompetingMinGen($T$) found that all minimal transducers compatible with $T$ were equivalent on $\Up$,
and MinGen($T$) found that $\calM$ was a minimal transducer compatible with $T$.
We know that $\calM'$ is compatible with $T$ but not equivalent with $\calM$.
Hence, it is not a minimal transducer compatible with $T$, which means it has strictly more states.

For termination, it remains to bound the number of membership queries and calls to the subprocedures. Note that, unlike in automata, it is impossible to enumerate all transducers with $n$ states or less compatible with an observation table. Termination will actually follow from the fact that we do note consider the set of all such transducers. Rather, we enumerate a useful finite subset of them,
as we end up showing in Proposition~\ref{prop:AlgoSizenFinite}.\qed
\end{proof}

\section{Proof of Lemma~\ref{lem:resultingtransduceralignment}}

\begin{proof}
First, it can be proven by simple induction that
for $u\notin\dom(f)$, $u$ has no run in $\calM$,
and for $u\in\dom(f)$, $q_\varepsilon\xrightarrow{u|f(u)} q_u$ in $\calM$.
Then, we finish with the final output in $\calM$.
If $T(u)\in\Gamma^*$ then $q_u$ is final in $\calM$
with a final output of $f(u)^{-1}T(u)$.
This means that $\sem \calM(u)=f(u)\cdot f(u)^{-1}T(u)=T(u)$.
Conversely, if $T(u)=\bot$, then either $u\notin\dom(f)$ and $u$ has no run in $\calM$,
or $u\in\dom(f)$, and by point $5$ of Definition~\ref{def:MergingMap}
\rbchanged{we obtain that for all $u'\in P_T$ such that $u'\equiv u$ we have $T(u')=\bot$.}
This means that $q_u$ is not final and thus $\sem\calM(u)$ is undefined.\qed
\end{proof}

\section{Proof of Proposition~\ref{prop:MinimalCorrespondingTable}}

\begin{proof}
We consider the equivalence $\equiv_{\calM}$ and function $f_{\calM}$ induced by Definition~\ref{def:InducedMergingMap} on $\calM$.

We show that ($\equiv_{\calM}$,$f_{\calM}$) is a merging map:

\begin{enumerate}
    \item for $u,u'\in P_T$, if $f_{\calM}(u)$ does not exist then $u$ is not in $\dom(\sem \calM)$.
    Thus, $u\equiv_{\calM} u'\iff u'\not\in\dom(\sem \calM)\iff f(u')$ does not exist
    \item if there exists $v$ such that $T(u\cdot v)\in\Gamma^*$,
    then as $\calM$ is compatible with $T$ we have that $\sem \calM(u\cdot v)=T(u\cdot v)$.
    Thus $u$ has a partial output in $\calM$, this is $f_{\calM}(u)$, and it is a prefix of $T(u\cdot v)$.
    \item for all $u\in P_T$ and $a\in\Sigma$, if $f_{\calM}(ua)$ exists, then it is either $f_{\calM}(u)$ or $f_{\calM}(u)\cdot\delta(q,a)$.
    In both cases $f_{\calM}(u)$ is a prefix of $f_{\calM}(ua)$.
    \item for all $u\in P_T$ if $f_{\calM}(u)$ exists then for all $u'\in P_T$ and $a\in\Sigma$
    such that $u\equiv_{\calM} u'$ and $ua,u'a\in P_T$,
    in $\calM$ the transition linking the state of $u$ to the state of $ua$ is the same as
    the one linking the state of $u'$ to the state of $u'a$.
    The output on that transition is the difference between the partial output of $u$ and $ua$,
    and the same difference between the partial output of $u$ and $ua$.
    In other word, it is $f_{\calM}(u)^{-1}f(ua)$ and the identical $f_{\calM}(u')^{-1}f_{\calM}(u'a)$.
    \item for all $u\in P_T$, if $T(u)\in\Gamma^*$, then $u\in\dom(\sem \calM)$.
    Hence if $u\equiv_{\calM} u'$, then $u'\in\dom(\sem \calM)$.
    Since $\calM$ is compatible with $T$, that means $T(u')\neq\bot$.
    Furthermore, if $T(u')\in\Gamma^*$, then $f_{\calM}(u')^{-1}T(u')$ is whatever output is left to produce $\sem \calM(u')$
    from the partial output of $u'$. This is the image of the state of $u'$ by $\delta_F$.
    Since this is also the case for $u$ we get that $f_{\calM}(u)^{-1}T(u)=f_{\calM}(u')^{-1}T(u')$
    \item for all $u\in P_T$ and $a\in \Sigma$, if $(u,a)$ is muted, then $f_{\calM}(ua)=f(u)$.
\end{enumerate}
By Lemma~\ref{lem:resultingtransduceralignment}, the transducer $\calM'$ obtained by using the merging map ($\equiv_{\calM}$,$f_{\calM}$) in Definition~\ref{def:restr} is compatible with $T$.
The number of equivalence classes in $\equiv_{\calM}$ (and thus the number of states of the resulting transducer)
is equal to or smaller than the number of states of $\calM$.\qed 
\end{proof}

\section{Muted Transitions and Open Ends}
We will presently prove Lemma~\ref{lem:ArbitraryTransducer}. Before that, though, we first establish some intermediate results.

\begin{lemma}\label{lem:MutedTransition}
Let $T$ be a table, $(\equiv,f)$ a merging map,
$\calM$ its resulting transducer, and $(u,a)$ a muted transition.
\rbchanged{Deleting $u\xrightarrow{a|\varepsilon}u$
or for any $u'\in P_T, w\in\Gamma^*$ replacing it by $u\xrightarrow{a|w}u'$ in $\calM$}
creates a transducer $\calM'$ compatible with $T$.
\end{lemma}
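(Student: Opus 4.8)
The plan is to reduce everything to a single structural fact: the muted transition $q_u\xrightarrow{a|\varepsilon}q_{ua}$ is never traversed by the run of a table word that carries an output. Consequently the only table-constrained runs on which it can lie are those of words that $T$ forces to be \emph{rejected}, and on such words only acceptance (not the produced output) matters. From this, deleting the transition and relabelling its output while keeping its target $q_{ua}$ will be immediate; the genuinely delicate case is the redirection of the target, which I single out as the main obstacle.

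First I would establish the key claim: for every table entry $w=b_1\cdots b_k$ with $T(w)\in\Gamma^*$, the run of $w$ in $\calM$ does not use the muted transition. By Lemma~\ref{lem:resultingtransduceralignment}, such a $w$ lies in $\dom(\sem{\calM})$; hence every prefix $b_1\cdots b_i$ lies in $\dom(f)$ and the run visits the states $q_{b_1\cdots b_i}$. Moreover every prefix belongs to $P_\Gamma$, witnessed by $b_1\cdots b_i\cdot(b_{i+1}\cdots b_k)=w$, a table entry with $T(w)\in\Gamma^*$. If the muted transition were taken at some step $j$, then $b_1\cdots b_{j-1}\equiv u$ and $b_j=a$, so $v\coloneqq b_1\cdots b_{j-1}$ satisfies $v\equiv u$ and $va=b_1\cdots b_j\in P_\Gamma$, contradicting the definition of a muted pair. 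Thus the run avoids the muted transition entirely, and its output is unaffected by any change to that transition.

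Next I would dispatch the two harmless operations. For deletion, removing a transition can only shrink the domain, so every $w$ with $T(w)=\bot$ (outside $\dom(\sem{\calM})$ because $\calM$ is compatible) stays outside $\dom(\sem{\calM'})$, while every $w$ with $T(w)\in\Gamma^*$ keeps its whole run and hence $\sem{\calM'}(w)=\sem{\calM}(w)=T(w)$. For a change of the transition's output that keeps its target $q_{ua}$, the sequence of states visited on any input is unchanged, so $\dom(\sem{\calM'})=\dom(\sem{\calM})$ and the $\bot$-entries are still rejected; the $\Gamma^*$-entries never meet the muted transition by the key claim, so relabelling it leaves their outputs intact. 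Both cases yield a transducer compatible with $T$.

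The main obstacle is the family of $\bot$-labelled entries whose runs \emph{do} cross the muted transition. These exist precisely because a muted pair forbids only successors in $P_\Gamma$, so nothing rules out $b_1\cdots b_j\in P_\emptyset$; one then checks by induction that every longer prefix, and $w$ itself, remains in $P_\emptyset$, which is exactly why $T(w)\neq\Gamma^*$. Deletion and output-relabelling survive this because neither can create new acceptance. Redirecting the target to an arbitrary $q_{u'}$ is where the argument becomes delicate: one must guarantee that the modified run cannot be steered into a final state unreachable before, i.e. that it stays inside the non-accepting, $P_\emptyset$ part of the machine. Controlling this is the crux of the statement, and it is what constrains the admissible choices of the new target so that it behaves like $q_{ua}$ on every word that $T$ pins down.
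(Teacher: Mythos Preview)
Your key claim and the handling of deletion and of output-relabelling with the target kept at $q_{ua}$ are correct and match the paper's proof almost verbatim: the paper also argues that no $v\in P_T$ with $T(v)\in\Gamma^*$ can traverse a muted transition (via the same prefix decomposition $v=v_{\mathit{pref}}\cdot a\cdot v_{\mathit{suf}}$ and the observation $v_{\mathit{pref}}\cdot a\in P_\Gamma$), and then concludes that the transition ``can be deleted or its output arbitrarily modified without impacting the compatibility of the transducer.'' If anything, you are more careful than the paper in treating the $\bot$-entries explicitly for these two operations.

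Where you diverge is on the redirection case, and here two remarks are in order. First, the paper's own proof does \emph{not} address changing the target either: it speaks only of deletion and output modification, so on this point the paper's argument is no more complete than yours. Second, your closing sentence --- that the statement ``constrains the admissible choices of the new target'' --- misreads the lemma: the text says ``for any $u'\in P_T$'', i.e., the target may be redirected arbitrarily. Under that literal reading the obstacle you isolate is genuine: a $\bot$-entry $w$ whose run in $\calM$ crosses the muted transition may, after redirection, be routed to a final state, and nothing in your argument or in the paper's rules it out. Since the redirection clause is never actually used downstream (Lemma~\ref{lem:ArbitraryTransducer} invokes only deletion, and Proposition~\ref{prop:CompetingNSC} only changes outputs to $\varepsilon$), the honest conclusion is that the lemma as stated overreaches, and both your proof and the paper's establish precisely the weaker claim: deletion, and arbitrary output change with the target fixed at $q_{ua}$.
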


\begin{proof}[of Lemma~\ref{lem:MutedTransition}]
\rbchanged{If a muted transition $(u,a)$ exists, it is never used in the run of any element $v\in P_T$ such that $T(v)\in\Gamma^*$.
If it was, that would mean that for some $v_{pref}\in P_T,v_{suf}\in\Sigma^*$ we have $v=v_{pref}\cdot a\cdot v_{suf}$ where $a$ is the letter that uses the muted transition.
We would then have $v_{pref}\equiv u$,
but also that $v_{pref}\in P_T\backslash P_\emptyset$ as $T(v)\in\Gamma^*$,}
which contradicts the fact that $(u,a)$ is a muted transition.

Since the muted transition is not used by any element $v\in P_T$ such that $T(v)\in\Gamma^*$,
it can be deleted or its output arbitrarily modified without impacting the compatibility of the transducer.\qed
\end{proof}

\begin{lemma}\label{lem:OpenTransition}
Let $T$ be a table, $(\equiv,f)$ a merging map,
$\calM$ its resulting transducer, and $(u,a)$ an open end for $T$.
For all $u'\in\dom(f),w\in\Gamma^*$, adding $u\xrightarrow{a|w}u'$ to $\calM$
creates a transducer $\calM'$ compatible with $T$.
\end{lemma}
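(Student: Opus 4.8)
The plan is to mirror the proof of Lemma~\ref{lem:MutedTransition}: I will show that the newly added transition $q_u\xrightarrow{a|w}q_{u'}$ is never traversed during the run of any word $x\in P_T$, and in particular of any table entry $x\in(P\cup P\Sigma)\cdot S$. Once this is established, the run of every such $x$ in $\calM'$ coincides with its run in $\calM$ (same states visited, same outputs, same final production), so $\sem{\calM'}(x)=\sem{\calM}(x)$ for all $x\in P_T$. Since $\calM$ is compatible with $T$ by Lemma~\ref{lem:resultingtransduceralignment}, this transfers compatibility to $\calM'$ directly: entries with $T(x)\in\Gamma^*$ keep the correct output, and entries with $T(x)=\bot$ keep $x\notin\dom(\sem{\calM'})$.

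To make the run analysis precise, I would first record two structural facts about $\dom(f)$ that follow from Definition~\ref{def:MergingMap}: $\dom(f)$ is prefix-closed (by rule 3, $f(ua)$ defined forces $f(u)$ defined), and it is a union of $\equiv$-classes (by rule 1, $u\equiv u'$ with $f(u')$ defined forces $f(u)$ defined). With these, I would prove by induction on prefix length that, reading any $x\in P_T$ using only the original transitions of $\calM$, after reading a prefix $p\in\dom(f)$ the run is at state $q_p$, and the run becomes undefined at the first prefix that is not in $\dom(f)$.

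The delicate point, and the main obstacle, is that states of $\calM$ are $\equiv$-classes, so a priori the run could leave the ``diagonal'' $p\mapsto q_p$ by taking, from $q_p$, a transition justified by a different representative $v\equiv p$ with $vb\in\dom(f)$ even though $pb\notin\dom(f)$. I would rule this out using rule 4 of Definition~\ref{def:MergingMap}: if $p\equiv v$ with $p\in\dom(f)$ and $pb,vb\in P_T$, then $pb\equiv vb$; combined with closure of $\dom(f)$ under $\equiv$ and with $vb\in\dom(f)$, this forces $pb\in\dom(f)$. Hence the run never drifts off the diagonal and visits precisely the states $q_p$ for prefixes $p\in\dom(f)$.

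Finally I would derive the contradiction. Suppose, for contradiction, that the run of some $x\in P_T$ in $\calM'$ uses the new transition, and consider its \emph{first} use: up to that point the run employs only original transitions, so by the previous paragraph it reaches $q_u$ after reading a prefix $p$ with $p\in\dom(f)$ and $p\equiv u$, and then reads $a$ (so $pa$ is a prefix of $x$). Since $P_T$ is prefix-closed, $pa\in P_T$; taking $v=p$ yields $v\equiv u$, $v\in P_T$ and $va\in P_T$, directly contradicting the definition of the open end $(u,a)$ in Definition~\ref{def:OpenTransition}. Therefore the new transition is never used on any $x\in P_T$, the semantics on $P_T$ is unchanged, and the compatibility argument of the first paragraph applies. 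This also covers the case $w=\varepsilon$ used for open completions.
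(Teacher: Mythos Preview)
Your proposal is correct and follows essentially the same approach as the paper: both argue that the newly added transition is never traversed by the run of any element of $P_T$, so compatibility is inherited from $\calM$. The paper's proof is terse (it only says that if some $va\in P_T$ reached $q_u$ just before reading $a$ then $v\equiv u$ and $va\in P_T$, contradicting the open-end definition), whereas you spell out the ``diagonal'' invariant $p\mapsto q_p$ via rules 1, 3 and 4 and argue using the \emph{first} use of the new transition; this extra rigor is fine but not a genuinely different route. One small caveat: you invoke Lemma~\ref{lem:resultingtransduceralignment}, which as stated assumes the merging map is minimal, while Lemma~\ref{lem:OpenTransition} does not; this is harmless since the proof of Lemma~\ref{lem:resultingtransduceralignment} never uses minimality, but it would be cleaner to note that explicitly.
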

\begin{proof}[of Lemma~\ref{lem:OpenTransition}]
It is possible to add this transition as no transition outgoing $u$ reading $a$ exists in $\calM$.
If it did, it would have been created by a couple $v,va\in P_T$ where $v\equiv u$.
Since $(u,a)$ is an open end, no such $v,va$ exist.

This transition does not break compatibility with $T$, as no element of $P_T$ will go through
this transition during a run.
If such an element $va\in P_T$ used this transition as its last step, this would provide
$v\in P_T$ such that $v$ reaches the state of $u$, which means $v\equiv u$.
Since $(u,a)$ is an open end, no such $v,va$ exist.\qed
\end{proof}

We are now ready to give the announced proof.
\begin{proof}[of Lemma~\ref{lem:ArbitraryTransducer}]
$\calM'$ has the same number of states as $\calM$, as we just add transitions.
Lemma~\ref{lem:OpenTransition} guarantees that $\calM'$ is compatible with $T$.
Conversely, Lemma~\ref{lem:MutedTransition} guarantees that
removing the muted transitions of $\calM$ would give another transducer $\calM''$ compatible with $T$.

Since $u$ uses an open transition in its run in $\calM'$, it would fail a run in $\calM''$,
where this transition would not exist. From this we deduce $u\not\in\dom(\sem {\calM''})$.
Since $u\in\dom(\sem{\calM'})$, and $u\in\Up$ we have $\sem{\calM'}_{|\Up}\neq\sem {\calM''}_{|\Up}$.
Hence, $\calM'$ and $\calM''$ are competing minimal transducers compatible with $T$.\qed
\end{proof}

\section{Proof of Proposition~\ref{prop:CompetingNSC}}

\begin{proof}
We take $\calM_1,\calM_2$ and consider their induced merging map.
This merging map cannot have less states as $\calM_1,\calM_2$ are minimal.
We call $\calM'_1,\calM'_2$ its resulting transducers. We note that they have the same states than $\calM_1,\calM_2$
but some of the transitions' outputs might have been deleted.

If $\calM'_1$ and $\calM'_2$ are not equivalent on $\Up$, then the first statement is true,
as both $\calM'_1,\calM'_2$ are minimal and come from merging maps.

If $\calM'_1$ and $\calM'_2$ are equivalent on $\Up$, then either $\calM_1$ and $\calM'_1$ are not equivalent on $\Up$,
or $\calM_2$ and $\calM'_2$ are not equivalent on $\Up$.
We call $\calM,\calM'$ this non-equivalent pair.
Since $\calM'$ was built as the resulting transducer of the induced map of $\calM$,
we have that $\calM'$ is identical to $\calM$ for all transitions except the transitions coming from open ends
(that are not built by Definition~\ref{def:restr}),
and the outputs of muted transitions. 
Lemmas~\ref{lem:OpenTransition} and~\ref{lem:MutedTransition} ensure that we can thus put the output
of all those transitions at $\varepsilon$ to change $\calM$ into $\calM''$, a minimal transducer compatible with $T$
and an open completion of the transducer $\calM'$.

Since $\calM$ and $\calM'$ are not equivalent on $\Up$,
we consider $u\in\Up$ a counterexample.
$\calM'$ is identical to $\calM$ except for deleted outputs on muted transitions in $\calM'$,
and additional transitions coming from open ends in $\calM$.
Hence either $u\in\dom(\sem{\calM})$ but $u\in\dom(\sem{\calM'})$, in which case $u$ uses a transition from an open end
    in its run in $\calM$, or $\sem{\calM}(u)\neq\sem{\calM'}(u)$, in which case $u$ uses a muted transition
    in its run in $\calM'$.

$\calM''$ is identical to $\calM$ except for the output of the muted and open transitions.
Hence, when we run the counterexample $u$ in $\calM''$, we know it will use a muted or open transition,
and thus that the second statement is true.\qed
\end{proof}

\section{Finiteness proof for Theorem~\ref{thm:guar-tr}}

Theorem~\ref{thm:guar-tr} necessitates two thing to prove that Algorithm~\ref{proc:algoUpTransducer} terminates.
First, the number of equivalence queries must be bounded.
This is the core of the use of CompetingMinGen($T$) in Algorithm~\ref{proc:algoUpTransducer}
and what provides the improved guarantee of graybox learning.

Second, we need to ensure that the number of calls to CompetingMinGen($T$)
between two equivalence queries is bounded. To that effect, we show that for each number of states $n$,
the number of transducers that Algorithm~\ref{proc:algoUpTransducer} considers and eliminates is bounded.

\begin{proposition}\label{prop:AlgoSizenFinite}
Algorithm~\ref{proc:algoUpTransducer} considers finitely many transducers of size $n$.
\end{proposition}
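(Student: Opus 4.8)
The plan is to exploit the fact that Algorithm~\ref{proc:algoUpTransducer} never enumerates arbitrary transducers: every size-$n$ transducer it inspects through MinGen($T$) or CompetingMinGen($T$) is the transducer \emph{resulting from} a merging map $(\equiv,f)$ on the current table $T$ (Definition~\ref{def:restr}), or an \emph{open completion} of such a transducer (Definition~\ref{def:OpenTransition}), and in all cases the outputs of muted and open transitions are fixed to $\varepsilon$. Hence it suffices to show that, for each fixed $n$, there are finitely many merging maps of size $n$ on $T$ and finitely many open completions of each resulting transducer. Since $P_T$ is the set of prefixes of the finite set $(P\cup P\Sigma)\cdot S$, it is itself finite, so there are only finitely many equivalence relations $\equiv$ on it. The real content is therefore to bound the partial output function $f$.

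Concretely, I would prove a uniform length bound $|f(u)|\le B$ for all $u\in\dom(f)$, where $B$ depends only on $T$. Let $M\defeq\max\{\,|T(x)| : T(x)\in\Gamma^*\,\}$, which is finite. For $u\in P_\Gamma$, point $2$ of Definition~\ref{def:MergingMap} makes $f(u)$ a prefix of some $T(u\cdot v)\in\Gamma^*$, so $|f(u)|\le M$ immediately. The delicate case is $u\in P_\emptyset$, where point $2$ is vacuous; here I would instead bound the output produced on each individual transition along the path that reaches $u$. By point $3$, $f$ is prefix-monotone along input extensions, so $f(u)=f(\varepsilon)$ concatenated with the transition outputs along the unique path $\varepsilon\to u$ in the resulting transducer, and this path has length at most the depth $D$ of $P_T$.

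It then remains to bound a single increment $f(u')^{-1}f(u'a)$ for $u'a\in\dom(f)$. If $(u',a)$ is muted, point $6$ forces $f(u'a)=f(u')$, so the increment is $\varepsilon$. Otherwise, by the definition of muted pairs there is some $v\equiv u'$ with $va\in P_\Gamma$, and point $4$ equates $f(u')^{-1}f(u'a)$ with $f(v)^{-1}f(va)$, whose length is at most $|f(va)|\le M$ since $va\in P_\Gamma$. Thus every increment has length at most $M$, giving $|f(u)|\le|f(\varepsilon)|+D\cdot M$. Finally $|f(\varepsilon)|\le M$ whenever $P_\Gamma\neq\emptyset$ (again point $2$); and when $P_\Gamma=\emptyset$ no $T(x)$ lies in $\Gamma^*$, so the resulting transducer has no final state, $\dom(\sem{\calM})=\emptyset$, and the initial production has no observable effect, so we may adopt the convention $f(\varepsilon)=\varepsilon$ consistent with fixing arbitrary outputs to $\varepsilon$. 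This yields the bound $B\defeq(D+1)\,M$.

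With $|f(u)|\le B$ and $\dom(f)\subseteq P_T$ finite, there are finitely many admissible $f$, hence finitely many merging maps of size $n$ and finitely many resulting transducers. Each open completion adds at most one transition of output $\varepsilon$ per open end, pointing to one of the $n$ states (Definition~\ref{def:OpenTransition}), so each resulting transducer admits only finitely many open completions. Combining these counts proves the proposition. I expect the main obstacle to be exactly the $P_\emptyset$ increment bound: it is the one place where points $4$ and $6$ of Definition~\ref{def:MergingMap} together with the muted/open conventions are indispensable, since without them the output on transitions that never feed an observed $\Gamma^*$-value could grow without bound and the search space would be infinite.
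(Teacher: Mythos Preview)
Your proof is correct and follows essentially the same approach as the paper: both argue that a size-$n$ candidate is determined by finitely many choices (the equivalence on $P_T$, the transition skeleton, the muted/open status of each transition, and the outputs), and that muted and open outputs being fixed to $\varepsilon$ is precisely what makes the output component finite. Your treatment is in fact more careful than the paper's on the key point: where the paper tersely says that non-muted transition outputs are ``bounded by rule~3'' of Definition~\ref{def:MergingMap}, you spell out the actual mechanism, namely that a non-muted pair $(u',a)$ admits some $v\equiv u'$ with $va\in P_\Gamma$, so rule~4 transports the increment to $f(v)^{-1}f(va)$, which rule~2 then bounds by $M$. This is the correct argument, and your explicit bound $|f(u)|\le (D+1)M$ makes it quantitative. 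The degenerate case $P_\Gamma=\emptyset$ is handled informally in both proofs; your observation that all resulting transducers then have empty domain (hence are indistinguishable on $\Up$) is the right way to dispose of it.
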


\begin{proof}
Each membership test asked by CompetingMinGen($T$) eliminates a possible transducer compatible with $T$ in its current state.
What we will show is that for each number of states $n$, the number of transducers to eliminate is finite.
This works by considering a transducer as a series of finite choices.

\emph{States:} Since the equivalence relations of  MMs are right congruences,
there exists a prefix-closed set of $n$ elements, each belonging in a different state.
Since the number of finite prefix-closed sets of $\Sigma^*$ is finite, this choice is bounded.

\emph{Transitions without output:} The transitions (output excluded) are partial functions from $Q\times \Sigma$ to $Q$,
where $Q$ is the set of $n$ states.
Since the number of such functions is finite, this choice is bounded.

\emph{Open and muted transitions:} Each transition can be open, muted, or neither, depending on the state of $T$.
That leaves less than $3^D$ possibilities, where $D$ is the number of transitions. This choice is bounded.
This is not a tight analysis: adding observations can make a transition no longer open, or no longer muted,
but never the other way around, so we will never explore all $3^D$ possibilities.

\emph{Transition outputs and final outputs:} Open and muted transitions have a fixed output of $\varepsilon$.
Other transitions have an output bounded by rule $3$ of Definition~\ref{def:MergingMap}.
As we add observations, this bound gets more and more restrictive, never less.
The final outputs' existence and value are dictated by rule $5$ of Definition~\ref{def:MergingMap}.
Hence the choice of transition outputs and final outputs is bounded.

In conclusion, for every number of states $n$, we have a final number of transducers to eliminate,
and each membership query by CompetingMinGen($T$) eliminates one.
Since $n$ goes from $1$ to the size of the smallest transducer compatible with the target of our learning algorithm, Algorithm~\ref{proc:algoUpTransducer} terminates.
\end{proof}

\section{Small model theorems to bound witness sizes in CompetingMinGen($T$)}

The finiteness of counterexamples $u\in\Sigma^*$ such that $u\in\Up$ and
two competing candidates have a different image for $u$, or $u$ uses a muted or open transition,
is guaranteed by small model theorems on transducers.

\begin{lemma}\label{lem:DifferenceWitness}
Let $\calM_1, \calM_2$ be two transducers. 
If $\sem{\calM_1}_{|\Up}\neq\sem{\calM_2}_{|\Up}$,
then there is $u\in\Up$ s.t. $|u|\leq 2\cdot(|\calM_1|\cdot|\calM_2|\cdot|\calA_{\Up}|)^2$
and $\sem{\calM_1}$, $\sem{\calM_2}$ differ on $u$.
\end{lemma}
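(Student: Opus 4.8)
The plan is to run $\calM_1$, $\calM_2$ and $\Aup$ synchronously on a common input and track, in a single product, the triple of control states $(p,q,r)$ of the three machines (there are $N \coloneqq |\calM_1|\cdot|\calM_2|\cdot|\Aup|$ such triples) together with the \emph{delay}, i.e. the signed residual difference between the two partial outputs. Concretely, after reading a prefix $x$ let $A$ and $B$ be the outputs produced so far by $\calM_1$ and $\calM_2$ (including the initial productions); whenever one of $A,B$ is a prefix of the other we record the delay $(s,d)$, where $s\in\{1,2\}$ says which machine is ahead and $d$ is the lead word. A witness $u\in\Up$ on which $\sem{\calM_1}$ and $\sem{\calM_2}$ differ then falls into one of three cases along its run: (a) a \emph{domain mismatch}, where the final triple is accepting in $\Aup$ and final in exactly one of $\calM_1,\calM_2$; (b) a \emph{hard mismatch}, where at some position $A$ and $B$ become prefix-incomparable; or (c) a \emph{final mismatch}, where $A,B$ stay prefix-comparable throughout but the final productions force $\sem{\calM_1}(u)\neq\sem{\calM_2}(u)$.

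Cases (a) and (b) are cheap. For a domain mismatch, membership of $u$ in $\Up$ and in each $\dom(\sem{\calM_i})$ is decided by the triple of final control states, so whenever the run repeats a triple I can excise the intervening input loop without changing the final triple; hence a witness of length at most $N$ exists. For a hard mismatch, the first position at which $A$ and $B$ disagree locks in a difference at an already-produced output position that no later output can repair; so, taking the shortest prefix $x$ producing prefix-incomparable outputs and then the shortest continuation reaching a triple that is accepting in $\Aup$ and final in both machines (such a continuation has length at most $N$ by co-reachability and trimness), I obtain a witness of length at most $2N$. Both are well within $2N^2$.

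The real work is case (c). I would take a shortest witness $u$ and examine its run, recording at each position $i$ the pair $(c_i,\delta_i)$ of control triple and delay. First I would argue that these pairs are pairwise distinct: if $c_i=c_j$ and $\delta_i=\delta_j$ for some $i<j$, then deleting the input segment between positions $i$ and $j$ yields a strictly shorter word that reaches the same final triple with the same final delay, hence still lies in $\Up$ and in both domains and still produces differing final outputs, contradicting minimality. It then remains to bound the number of distinct pairs $(c,\delta)$ a comparable run can exhibit; since there are only $N$ triples, it suffices to bound the number of distinct delays that can occur at a fixed triple $c$.

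The main obstacle is exactly this last bound, and I would obtain it by combinatorics on words. Every loop of the run returning to a fixed triple $c$ has a \emph{fixed} pair of output effects $(\alpha,\beta)\in\Gamma^*\times\Gamma^*$, namely the words appended by $\calM_1$ and $\calM_2$ along that control loop, independent of the delay with which the loop is entered; iterating such a loop drives the partial outputs to $(A\alpha^t,B\beta^t)$. A Fine--Wilf argument then yields a dichotomy: either iterating forces a hard mismatch within boundedly many steps (reducing to case (b)), or $\alpha$, $\beta$ and the lead word are powers of a common primitive word and the loop is delay-preserving up to excision (reducing to the distinctness argument). Carrying this out shows that at each of the $N$ triples only $O(N)$ distinct delays are relevant, at most $N$ per sign, whence the factor $2$; so a comparable run with pairwise-distinct $(c,\delta)$ pairs has length at most $2N^2$. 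Combining with cases (a) and (b) gives $|u|\le 2N^2 = 2\,(|\calM_1|\cdot|\calM_2|\cdot|\Aup|)^2$, as required. The point to emphasise is that this bound is on the \emph{input} length and is therefore independent of the (possibly large) per-transition output lengths.
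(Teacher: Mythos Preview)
The paper's own proof is essentially a citation: the domain-mismatch case is a routine product-automaton pumping, and for the output-mismatch case it simply invokes Lemma~1 of Sch\"utzenberger~\cite{schutzenberger1975relations}. Your proposal, by contrast, tries to reprove that classical small-model theorem from scratch via the delay-tracking argument. That is the right classical idea, but the sketch has two real gaps.

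First, your case~(b) bound of $2N$ is not justified. You bound the continuation after the first incomparability by $N$, but you give no argument bounding the prefix $x$ itself. Removing an input loop that repeats a control triple inside $x$ can \emph{destroy} the incomparability: if before the loop the partial outputs are $(A,B)$ with $B$ a prefix of $A$, the loop appends $(\alpha,\beta)$, and the next step appends $(\alpha',\beta')$ producing incomparable $(A\alpha\alpha',B\beta\beta')$, there is no reason $(A\alpha',B\beta')$ must be incomparable. So the naive pumping on triples alone fails; the prefix up to the first divergence needs exactly the same $(c,\delta)$-distinctness analysis you reserve for case~(c). In other words, cases~(b) and~(c) should be merged: track $(c,\delta)$ on the maximal comparable prefix, then add at most $N$ for the suffix once divergence is locked in.

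Second, in case~(c) the leap from the Fine--Wilf dichotomy to ``at most $N$ delays per sign at each triple'' is asserted, not proved. The dichotomy you state concerns a \emph{single} loop iterated, but along a shortest witness the successive returns to a triple $c$ are in general via \emph{different} loops with different $(\alpha,\beta)$ pairs, and it is not obvious why this limits the set of delays seen at $c$ to $O(N)$. The classical argument does go through combinatorics on words here, but it requires more than the one-loop Fine--Wilf statement; you need to control how the delay length can grow and shrink across \emph{all} returns to $c$, and pigeonhole on the triples visited between length-increasing (resp.\ decreasing) steps. Since the paper simply cites the result, your self-contained route is more informative, but as written it does not yet close.
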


\begin{proof}
\abchanged{%
The case where $u\in\dom(\sem{\calM_1})\backslash\dom(\sem{\calM_2}$ (or vice-versa) is a simple case of difference between
the languages $\dom(\sem{\calM_1})\cap\Up$ described by $\calA_{\Up} \times \calM_1$
and $\dom(\sem{\calM_2})\cap\Up$ described by $\calA_{\Up} \times \calM_2$.
Such a difference is of size less than the size of the product of those two automata,
which is $2\cdot(|\calM_1|\cdot|\calM_2|\cdot|\calA_{\Up}|^2)$
(although the square on $|\calA_{\Up}|$ can easily be removed).

The case where $\sem{\calM_1}(u)\neq\sem{\calM_2}(u)$ is a classical result, provided in~\cite{schutzenberger1975relations}. }
Lemma 1 ensures that if the transducer $\calA_{\Up} \times (\calM_1\cup \calM_2)$ has a word with more than one output, there exists a $u\in\Sigma^*$ of size less than $1+2\cdot(|\calM_1|\cdot|\calM_2|\cdot|\calA_{\Up}|)\cdot(|\calM_1|\cdot|\calM_2|\cdot|\calA_{\Up}|-1)$ such that $u$ has more than one output.
\end{proof}

\begin{lemma}\label{lem:ArbitraryWitness}
Let $\calM$ be a transducer. 
If there exists $v\in\dom(\sem \calM_{|\Up})$ that uses a specific transition of $\calM$,
then there exists $u\in\dom(\sem \calM_{|\Up})$ of size less or equal than $2\cdot|\calA_{\Up}|\cdot|\calM|$
that uses this same transition.
\end{lemma}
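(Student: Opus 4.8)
The plan is to reduce the statement to a standard shortest-path argument in the product automaton $\Aup\times\calM$. First I would observe that, ignoring outputs, the underlying graph of $\Aup\times\calM$ recognizes exactly $\dom(\sem{\calM}_{|\Up})=\Up\cap\dom(\sem\calM)$ and has at most $|\Aup|\cdot|\calM|$ states. Writing the fixed transition as $t\colon q\xrightarrow{a\mid w}q'$ in $\calM$, any $v\in\dom(\sem{\calM}_{|\Up})$ that uses $t$ corresponds to an accepting run in this product (one that ends in a state whose $\Aup$-component and $\calM$-component are both final), and at some step this run takes the product transition induced by $t$, say from a state $(p,q)$ to a state $(p',q')$, where $p\xrightarrow{a}p'$ is the matching move in $\Aup$.

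Next I would decompose the run of $v$ at that occurrence into a prefix, from the initial product state to $(p,q)$, and a suffix, from $(p',q')$ to an accepting product state. The state $(p,q)$ is reachable and $(p',q')$ is co-reachable to an accepting state, both witnessed by $v$ itself, so each segment can be replaced by a shortest path with the same endpoints. A shortest path visits no product state twice, hence reads at most $|\Aup|\cdot|\calM|-1$ letters.

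Then I would reassemble $u$ as (the word read along the shortest prefix path) followed by $a$ followed by (the word read along the shortest suffix path). By construction the reassembled run still takes the product transition induced by $t$, so $u$ uses $t$ in $\calM$; and since the run is accepting in both components we get $u\in\Up$ and $u\in\dom(\sem\calM)$, i.e. $u\in\dom(\sem{\calM}_{|\Up})$. Its length is at most $(|\Aup|\cdot|\calM|-1)+1+(|\Aup|\cdot|\calM|-1)=2\cdot|\Aup|\cdot|\calM|-1\leq 2\cdot|\Aup|\cdot|\calM|$, as claimed.

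The point requiring care is that cutting and re-splicing the run must preserve the use of the specified transition; this is guaranteed precisely because we keep the product transition induced by $t$ in the middle and only shorten the reachability and co-reachability segments surrounding it. No reasoning about outputs is needed, as membership in $\dom(\sem{\calM}_{|\Up})$ depends only on the underlying input automaton of $\calM$ together with $\Aup$.
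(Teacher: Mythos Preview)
Your proposal is correct and follows essentially the same approach as the paper: work in the product $\Aup\times\calM$, split $v$ around the product transition induced by $t$, and shorten the prefix and suffix segments by removing loops (equivalently, taking shortest paths), yielding the bound $2\cdot|\Aup|\cdot|\calM|-1$. If anything, you are slightly more explicit than the paper in tracking the product states $(p,q)$ and $(p',q')$ rather than only the $\calM$-components.
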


\begin{proof}
This is a small model theorem on the transducer $\calA_{|\Up}\times \calM$.
We consider the path that leads $v$ to and from the targeted transition $q\xrightarrow{a|w}q'$ in $\calM$,
and cut $v$ into $v_0\cdot a\cdot v_1$,
where $v_0$ is the part of $v$ from the initial state to $q$,
and $v_1$ is the part of $v$ from $q'$ to a final state.
We can remove from $v_0$ and $v_1$ any loop of $\calM\cdot\calA_{|\Up}$,
to obtain $u_0$ and $u_1$ such that $|u_0|$ and $|u_1|$ are both strictly smaller than $|\calA_{|\Up}|\times|\calM|$, $u_0$ still goes from the initial state of $\calM$ to $q$, $u_1$ still goes from $q'$ to a final state of $\calM$, and $u=u_0\cdot a\cdot u_1$ is still accepted by $\calA_{|\Up}$.
Hence $u\in\Up$ and $u$ uses the targeted transition $q\xrightarrow{a|w}q'$ in $\calM$.

\end{proof}

\section{Effective computation of CompetingMinGen($T$)}\label{sec:TransducersUpEncoding}

We present a way to encode the problem of finding merging maps.
More specifically, to build the oracle CompetingMinGen($T$),
we need to see if, given a state number $n$, there exists
one merging map with two minimal resulting transducers (and its witness),
or two non-equivalent merging maps (and their witness).

We encode this problem as word equations, that is conjunctions and disjunctions of both equality and inequalities between words in a given domain~\cite{lin2018quadratic},\rbchanged{ with additional operators: concatenation and
membership in a finite set}.
SMT solvers have specialized tools~\cite{zheng2013z3,liang2016efficient} to solve such equations. 
As we consider a bounded domain of definition\rbchanged{ once we have fixed the size of transducers we are considering}, we could alternatively write these equations as a SAT query,\rbchanged{ as all the operators except the universal quantification can be polynomially encoded into SAT. Since we consider universal formulae over either $P_T$ or $\Sigma$, such formulae could also be encoded in SAT by replacing universal quantifiers by a conjunction on polynomially many variables in $P_T$ or $\Sigma$.}
\abchanged{This allows us to keep the theoretical NP bound of a SAT encoding,
with a more readable string equations encoding, more adapted to the problems we consider.}

For convenience, we say in what follows that a word variable $s$ can be undefined (denoted $s=\bot$).
In practice, we add a boolean variable $s^{def}$ associated to every word variable, that is true if and only if the associated variable is defined.

We now present equations for the various objects introduced in Section~\ref{sec:TransducersUp}:

\paragraph{Table:} A table $T: (P\cup P\Sigma)\cdot S\to \Gamma^*\cup\{\bot,\#\}$ is represented as the set that contains for all $u\in P_T$ 
    a boolean $T^{\#}_{u}$ that is true if and only if $T(u)= \#$, 
    and a word $T_{u}$ on $\Gamma^*$ that we assign to $T(u)$ if $T(u)$ is defined and $\bot$ otherwise.
\abchanged{We call $\phi_{T}$ the conjunction of assignations that describe the table $T$.}

\paragraph{Merging map:} A merging map $(\equiv,f)$ is represented as
 a boolean $E_{u,v}$ for all $u,v\in P_T$ representing if $u\equiv v$,
 and a word $f_u$ on $\Gamma^*$ for all $u\in P_T$ that represents the output of $f(u)$.
Every $f_u$ is of size smaller than $|u|\cdot\text{max}\lbrace|T(v)|\rbrace$.

The following equations ensure that $\equiv$ is an equivalence relation: 
For all $u,v,w\in P_T$, $E_{u,u}=\mathtt{true}$, $E_{u,v} = E_{v,u}$, and $E_{u,v} \wedge E_{v,w} \Rightarrow E_{u,w}$. 
\abchanged{We call $\phi_{eq}$ the conjunction of those constraints.}

We encode each point of Definition~\ref{def:MergingMap} in word equations. For all $u,v\in P_T$:
\begin{enumerate}
    \item $f_u=\bot \Rightarrow (f_v= \bot \Longleftrightarrow E_{u,v})$
    \item $(\exists u'\in P_T.\ T_{u\cdot u'}\neq \bot \wedge \neg T^{\#}_{u\cdot u'}) \Rightarrow (f_u\neq \bot \wedge \exists x\in \Gamma^*.\  f_u\cdot x=f_{u\cdot u'})$
    \item For all $a\in \Sigma$: $f_{ua}\neq \bot \Rightarrow (f_u\neq \bot \wedge \exists x\in \Gamma^*.\  f_u\cdot x=f_{ua})$
    \item For all $a\in \Sigma$ such that $ua,va\in P_T$: $(f_{u}\neq\bot\wedge E_{u,v}) \Rightarrow ( E_{ua,va}\wedge f_{ua}\neq \bot \Rightarrow (\exists x\in \Gamma^*.\ f_u\cdot x = T_u \wedge f_v\cdot x = T_v))$
    \item $(T_{u}\neq\bot\wedge \neg T^{\#}_{u} \wedge E_{u,v}) \Rightarrow (\neg T^{\#}_{u} \Rightarrow (\exists x\in \Gamma^*.\ f_u\cdot x = T_u \wedge f_v\cdot x = T_v))$
    \item $(\forall v\in P_T,\  va \notin P_\Gamma \vee \neg E_{u,v})\Rightarrow f_{ua}=\bot\lor f_{ua} = f_u$
\end{enumerate}
\abchanged{We call $\phi_{\text{mm}}$ the conjunction of those constraints and $\phi_{eq}$.}

\begin{lemma}\label{lem:EncodeMergingMap}
Let $T$ be an observation table. 
    If the variables $T(u)$, $E_{u,v}$, and $f_u$ respect \abchanged{$\phi_{T}\wedge\phi_{mm}$},
    then $(E_{u,v},f_u)$ represents a merging map of $T$.
\end{lemma}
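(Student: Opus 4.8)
The plan is to read the hypothesis $\phi_T \wedge \phi_{mm}$ as fixing an interpretation of the variables, and then to check clause by clause that this interpretation yields a genuine merging map in the sense of Definition~\ref{def:MergingMap}. Concretely, I would define the relation $\equiv$ on $P_T$ by $u \equiv v \defiff E_{u,v} = \mathtt{true}$, and the partial function $f$ by letting $f(u) = f_u$ whenever the definedness boolean associated to $f_u$ is true (i.e. $f_u \neq \bot$) and leaving $f(u)$ undefined otherwise. The conjunct $\phi_T$ guarantees that the variables $T_u$ and $T^{\#}_u$ agree with the actual table $T$, so that the antecedents appearing in the encoded constraints really do refer to the observations of $T$. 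With these definitions in place, the goal reduces to verifying the six properties of Definition~\ref{def:MergingMap} for $(\equiv,f)$.

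First I would establish that $\equiv$ is an equivalence relation, which is immediate from $\phi_{eq}$ (a conjunct of $\phi_{mm}$): the constraints $E_{u,u} = \mathtt{true}$, $E_{u,v} = E_{v,u}$ and $E_{u,v}\wedge E_{v,w}\Rightarrow E_{u,w}$ are exactly reflexivity, symmetry and transitivity. The remainder is a translation dictionary between the word-equation constraints $(1)$--$(6)$ and the six clauses of Definition~\ref{def:MergingMap}. The dictionary reads: ``$f(u)$ exists'' corresponds to ``$f_u \neq \bot$''; ``$T(u\cdot v)\in\Gamma^*$'' corresponds to ``$T_{u\cdot v}\neq\bot \wedge \neg T^{\#}_{u\cdot v}$''; ``$w$ is a prefix of $w'$'' corresponds to ``$\exists x\in\Gamma^*.\ w\cdot x = w'$''; and a quotient equality of the form $f(u)^{-1}w = f(u')^{-1}w'$ corresponds to ``$\exists x\in\Gamma^*.\ f_u\cdot x = w \wedge f_{u'}\cdot x = w'$'' (the quotients being well defined precisely because the prefix conditions of clauses $(2)$ and $(3)$ hold). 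Under this dictionary I would check that each constraint $(i)$ transcribes, and in particular entails, clause $(i)$. Clause $(1)$ deserves a remark: its encoding $f_u=\bot \Rightarrow (f_v=\bot \Longleftrightarrow E_{u,v})$ is slightly stronger than the definitional clause, since it additionally forces all undefined elements into a single class; but the ``$\Rightarrow$'' direction of the biconditional is exactly what the definition demands, so soundness is preserved.

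The step I expect to require the most care is the faithful handling of partiality together with the bounded existential quantifiers. The quantifier $\exists x\in\Gamma^*$ is a priori over an infinite domain, so I would invoke the stated bound $|f_u|\le |u|\cdot\max\{|T(v)|\}$ --- together with the fact that any witness $x$ is a suffix of a word whose length is likewise bounded --- to argue that $\exists x\in\Gamma^*$ is a shorthand for an existential over words of polynomial length. This is what makes the prefix and quotient encodings faithful while keeping the formula in the intended \NP{} fragment, so that the object we extract is a legitimate finite merging map. Apart from this bookkeeping, the proof is a direct constraint-by-constraint verification: the genuine content lies entirely in the correctness of the dictionary above, and there is no hard combinatorial core.
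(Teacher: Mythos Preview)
Your approach is correct and is essentially the only reasonable one: set up the dictionary between the encoded variables and the objects of Definition~\ref{def:MergingMap}, then verify clause by clause. The paper itself states Lemma~\ref{lem:EncodeMergingMap} without proof, treating it as immediate from the way the six constraints of $\phi_{mm}$ were written to mirror the six clauses of the definition; your write-up is exactly the verification the paper leaves implicit. One small remark: your final paragraph on bounding the existential witnesses $x$ is about the encoding living in the intended \NP{} fragment, not about soundness of the lemma as stated --- the lemma only asserts that any satisfying assignment yields a merging map, and for that the unbounded reading of $\exists x\in\Gamma^*$ already suffices. It is fine to include the remark, but it belongs to the complexity discussion rather than to the proof of this lemma.
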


\paragraph{Transducer run:} To encode the case of CompetingMinGen($T$)
where there exists two minimal merging map with non-equivalent corresponding transducers,
we first show how to encode that
a word $u\in \Sigma^\ell$ is accepted by the resulting transducer of a merging map with output $w$.
We define
 letters $a_1,\ldots, a_\ell\in\Sigma$ such that $u=a_1\cdot\ldots\cdot a_\ell$,
 input words $v_0,v_1,\ldots, v_\ell\in P_T$ that are representative of the equivalence classes of the states visited,
 output words $w_1,\ldots, w_\ell,w_{out}\in\Gamma^*$ such that $w=w_1\cdot\ldots\cdot w_n\cdot w_{out}$,
 and a word $\delta_{u,a}$ on $\Gamma^*$ that represents the output of transition $(u,a,ua)$.

    
The following equations ensure that $u$ is accepted by the transducer corresponding to the merging map, with output $w$: 
 $\forall u\in P_T,a\in \Sigma: f_u\cdot\delta_{u,a}=f_{ua}$
 $v_0=\varepsilon$, 
 $\forall i\in[1,n], E_{v_i,(v_{i-1}\cdot a_i)}$, 
 $\forall i\in[1,n], w_i=\delta_{v_{i-1},a_i}$, 
 and $f_{v_n}\cdot w_{out}= T_{v_n}$.
\abchanged{We call $\phi_{run}(u,w)$ the conjunction of those constraints.

Conversely, we can create a formula that ensures that a word $u$ fails in the transducer resulting from the MM
described by $E_{u,v}$ and $f_u$.
We still use $a_1,\ldots, a_\ell\in\Sigma$ such that $u=a_1\cdot\ldots\cdot a_\ell$,
and input words $v_0,v_1,\ldots, v_\ell\in P_T$ that are representative of the equivalence classes of the states visited,
except we now allow $v_i$ to be undefined ($v_i=\bot$).
We do not care about the output.
    
The following equations ensure that $u$ fails in the transducer corresponding to the merging map,
by failing to be able to read a letter or ending in a non-final state: 
$v_0=\varepsilon$, 
$\forall i\in[1,n], E_{v_i,(v_{i-1}\cdot a_i)}\lor v_i=\bot$, 
and $v_n=\bot\lor \forall v\in P_T, E_{v,v_n}\implies T(v)=\bot$.
We call $\phi_{\lnot{run}}(u)$ the conjunction of those constraints.
}

\begin{lemma}\label{lem:EncodeResultingTransducerRun}
    Let $T$ be an observation table. 
    If $(E_{u,v},f_u)$ represents a merging map of $T$,
    and $u\in\Sigma^*$, $w\in\Gamma^*$ \rbchanged{respect $\phi_{run}(u,w)$},
    then $u$ is accepted by the resulting transducer of this merging map with output $w$.
\end{lemma}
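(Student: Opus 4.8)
The plan is to treat this as a soundness statement for the run-encoding: any assignment satisfying $\phi_{run}(u,w)$ must exhibit an accepting run of the resulting transducer $\calM$ (Definition~\ref{def:restr}) on $u$ that emits exactly $w$. By hypothesis, and via Lemma~\ref{lem:EncodeMergingMap}, the variables $E_{u,v},f_u$ encode a genuine merging map $(\equiv,f)$, so $\calM$ is well defined. First I would read off the witness data from the assignment: the decomposition $u=a_1\cdots a_n$, the representatives $v_0=\varepsilon,v_1,\dots,v_n\in P_T$ with $E_{v_i,\,v_{i-1}a_i}$, the per-step outputs $w_i=\delta_{v_{i-1},a_i}$ satisfying $f_{v_{i-1}}\cdot\delta_{v_{i-1},a_i}=f_{v_{i-1}a_i}$, and the tail $w_{out}$ with $f_{v_n}\cdot w_{out}=T_{v_n}$. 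The point to keep in mind throughout is that $\phi_{run}$ threads the run through chosen representatives $v_i$ rather than through the literal prefixes $a_1\cdots a_i$ (which need not even lie in $P_T$); so the argument is phrased entirely in terms of the states $q_{v_i}$.

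The first technical step is to check that all visited states exist. Since $w\in\Gamma^*$, each $w_i$ is a genuine word, so the constraint $f_{v_{i-1}}\cdot\delta_{v_{i-1},a_i}=f_{v_{i-1}a_i}$ forces $v_{i-1},v_{i-1}a_i\in\dom(f)$; combined with $v_i\equiv v_{i-1}a_i$ and rule~1 of Definition~\ref{def:MergingMap}, this gives $v_i\in\dom(f)$, so each $q_{v_i}$ is a legitimate state, and $f_{v_n}\cdot w_{out}=T_{v_n}$ forces $T(v_n)\in\Gamma^*$, i.e.\ $q_{v_n}$ is final. I would then show by induction on $i$ that, reading $a_1\cdots a_i$, the transducer $\calM$ reaches state $q_{v_i}$. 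The base case holds as $v_0=\varepsilon$ and $q_{v_0}=q_\varepsilon$ is initial. For the step, the representative $v_{i-1}\in\dom(f)$ with $v_{i-1}a_i\in\dom(f)$ supplies the transition $q_{v_{i-1}}\xrightarrow{a_i|w_i}q_{v_{i-1}a_i}$ of Definition~\ref{def:restr}, whose output is $f(v_{i-1})^{-1}f(v_{i-1}a_i)=\delta_{v_{i-1},a_i}=w_i$, and $v_i\equiv v_{i-1}a_i$ gives $q_{v_{i-1}a_i}=q_{v_i}$. That this is genuinely the (unique) $a_i$-transition out of the state $q_{v_{i-1}}$, independent of which representative is used, is exactly rule~4 of Definition~\ref{def:MergingMap}, which guarantees both the target state and the relative output agree across representatives.

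It then remains to assemble the output. Concatenating along the run, $\calM$ emits the initial production $f(\varepsilon)$, then the transition outputs $w_1,\dots,w_n$, then the final output $\delta_F(q_{v_n})=f(v_n)^{-1}T(v_n)=w_{out}$. This is precisely $w$ once $w$ is read as $w_0\cdot w_1\cdots w_n\cdot w_{out}$ with $w_0=f(\varepsilon)$ the initial production, matching the step-by-step output of the main encoding. Hence $u\in\dom(\sem\calM)$ and $\sem\calM(u)=w$, as required.

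I expect the main obstacle to be the determinism bookkeeping embodied in rule~4: one must argue that routing the run through the encoding's free representatives $v_i$ yields the same states and the same outputs as the transducer's own (representative-independent) transitions, so that the $w_i$ computed by $\phi_{run}$ coincide with the outputs $\calM$ actually produces. The other delicate points are the $\bot$/undefinedness tracking needed to certify every $v_i\in\dom(f)$, and the accounting for the initial production $f(\varepsilon)$, which the run-formula folds into its first output segment rather than recording separately.
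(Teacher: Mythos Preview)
The paper does not actually supply a proof for this lemma: it is stated immediately after the list of constraints defining $\phi_{run}(u,w)$, with the implicit justification being that those constraints ``ensure that $u$ is accepted by the transducer corresponding to the merging map, with output $w$.'' There is no argument beyond that sentence.

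Your proposal is correct and supplies precisely the soundness argument the paper leaves implicit. You correctly (i) extract the witness data $a_i$, $v_i$, $w_i$, $w_{out}$ from the satisfying assignment, (ii) use the concatenation constraints together with $w\in\Gamma^*$ and rule~1 of Definition~\ref{def:MergingMap} to force each $v_i\in\dom(f)$, (iii) run the obvious induction on $i$ using Definition~\ref{def:restr} to identify the transition $q_{v_{i-1}}\xrightarrow{a_i\mid w_i}q_{v_{i-1}a_i}=q_{v_i}$, (iv) invoke rule~4 to ensure the transition and its output are independent of the chosen representative, and (v) assemble the total output including the initial production $f(\varepsilon)$ and the final output $f(v_n)^{-1}T(v_n)=w_{out}$. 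Your flagged ``main obstacle'' (representative-independence via rule~4) and the $\bot$-tracking are exactly the non-trivial points, and you handle them. There is nothing substantive to compare against; your write-up is strictly more detailed than what the paper offers.
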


We can, using classical methods, encode that a word accepted by a regular language $\Up$
is accepted by two different merging maps with different output. \abchanged{We call this formula $\phi_{\Up}(u)$}.
Thus:

\begin{lemma}\label{lem:EncodeDifferentMergingMap}
    Let $T$ be an observation table. 
    It can be encoded with word equations there exists two merging maps of $T$ \abchanged{of size $n$}
    and corresponding transducers \abchanged{non-equivalent on $\Up$}.
    If this is the case, a witness $u$ can be returned.
\end{lemma}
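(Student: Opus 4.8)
The final statement to prove is Lemma~\ref{lem:EncodeDifferentMergingMap}: that we can encode, using word equations, the existence of two merging maps of $T$ of size $n$ whose resulting transducers are non-equivalent on $\Up$, and return a witness $u$ when they exist. I need to assemble this from the pieces already built: $\phi_{mm}$ (encodes a single MM, Lemma~\ref{lem:EncodeMergingMap}), $\phi_{run}(u,w)$ and $\phi_{\lnot run}(u)$ (encode accepting/failing runs, Lemma~\ref{lem:EncodeResultingTransducerRun}), and $\phi_{\Up}(u)$ (encodes membership of $u$ in $\Up$ via the DFA $\Aup$).

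**The plan.** The idea is to take two disjoint copies of the MM variables — $(E_{u,v}, f_u)$ for the first merging map and primed copies $(E'_{u,v}, f'_u)$ for the second — and conjoin $\phi_{mm}$ on each copy, together with a size-$n$ constraint on each. Non-equivalence of the two resulting transducers on $\Up$ then decomposes, exactly as in the proof of Lemma~\ref{lem:DifferenceWitness}, into two disjunctive cases for the witness word $u$: either $u$ lies in the domain of one resulting transducer but not the other, or $u$ lies in both domains but the outputs differ. I would write the constraint on $u$ as $\phi_{\Up}(u) \wedge \Phi_{\mathrm{diff}}(u)$, where
\[
\Phi_{\mathrm{diff}}(u) \;=\; \bigl(\exists w.\ \phi_{run}(u,w) \wedge \phi_{\lnot run}'(u)\bigr)
\;\vee\; \bigl(\exists w.\ \phi_{\lnot run}(u) \wedge \phi_{run}'(u,w)\bigr)
\;\vee\; \bigl(\exists w,w'.\ \phi_{run}(u,w) \wedge \phi_{run}'(u,w') \wedge w \neq w'\bigr),
\]
using the primed run-formulae built from the primed MM variables. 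The full formula is the conjunction of $\phi_T$, $\phi_{mm}$, $\phi_{mm}'$, the two size constraints, $\phi_{\Up}(u)$, and $\Phi_{\mathrm{diff}}(u)$. A satisfying assignment directly yields the two merging maps and the witness word $u$ read off from the letter variables $a_1,\ldots,a_\ell$.

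**Key steps and correctness.** First I would fix the bound $\ell$ on $|u|$: by Lemma~\ref{lem:DifferenceWitness} it suffices to take $\ell = 2\cdot(|\calM_1|\cdot|\calM_2|\cdot|\Aup|)^2$, which is polynomial once the state count $n$ is fixed, so the existential quantification over $u$ and over the output words $w,w'$ ranges only over polynomially-bounded words and can be encoded with polynomially many variables (the bound on each $f_u$ and on the outputs is already recorded in the excerpt). Second, soundness follows because any model makes $(E,f)$ and $(E',f')$ merging maps by Lemma~\ref{lem:EncodeMergingMap}, makes their resulting transducers well-defined, and by Lemma~\ref{lem:EncodeResultingTransducerRun} the run-predicates faithfully describe acceptance and output; thus $\Phi_{\mathrm{diff}}(u)$ holding together with $\phi_{\Up}(u)$ witnesses genuine non-equivalence on $\Up$. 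Third, completeness follows from Lemma~\ref{lem:DifferenceWitness}: if two size-$n$ MMs with non-equivalent resulting transducers exist, a distinguishing $u\in\Up$ of length at most $\ell$ exists and falls into one of the three cases, so the formula is satisfiable.

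**The main obstacle.** The delicate point is justifying that the length bound from Lemma~\ref{lem:DifferenceWitness} is the correct one to impose \emph{here}, since that lemma is phrased for two transducers $\calM_1,\calM_2$ rather than for the resulting transducers of two merging maps. I would address this by noting that the resulting transducers of size-$n$ MMs are honest transducers of at most $n$ states, so the lemma applies verbatim with $|\calM_1|,|\calM_2|\le n$; the only care needed is that the output-length bounds used to size the $w,w'$ variables are consistent with the prefix constraints in $\phi_{mm}$ (rule~3 of Definition~\ref{def:MergingMap}), which cap $|f_u|$ and hence cap transition outputs. The remaining work — writing $\phi_{\lnot run}'$ and the disjunction syntactically, and checking each disjunct uses only equality, inequality, concatenation, and bounded quantifiers — is routine, keeping everything inside the $\NP$-encodable fragment claimed in the paper.
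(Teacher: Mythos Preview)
Your proposal is correct and follows essentially the same approach as the paper: duplicate the merging-map encoding into primed and unprimed variable sets, conjoin $\phi_{mm}$ on each copy, and express non-equivalence on $\Up$ as a disjunction of domain-mismatch (via $\phi_{run}$ on one side and $\phi_{\lnot run}$ on the other) and output-mismatch (via $\phi_{run}$ on both sides with $w\neq w'$), bounding $|u|$ by Lemma~\ref{lem:DifferenceWitness}. If anything, your treatment is slightly more careful than the paper's---you include both directions of the domain asymmetry and make explicit that the resulting transducers have at most $n$ states so that Lemma~\ref{lem:DifferenceWitness} applies with $|\calM_1|,|\calM_2|\le n$.
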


\begin{proof}
Using several sets of word equations, what we want is:
\begin{enumerate}
    \item a merging map represented by the variables $E,f$ \abchanged{of size $n$}
    \item an input-output pair $u,w$ of its resulting transducer, where $|u|\leq\ell$
    \item a merging map represented by the variables $E',f'$ \abchanged{of size $n$}
    \item \begin{itemize}
        \item that $u$ fails its run in the resulting transducer of $E',f'$, or
        \item an input-output pair $u',w'$ of its resulting transducer, $|u'|\leq\ell$
        and that $u=u'\in\Up$ but $w\neq w'$,
    \end{itemize}
\end{enumerate}

The four first points have been proven in Lemma~\ref{lem:EncodeMergingMap} and~\ref{lem:EncodeResultingTransducerRun}.
\abchanged{To create two merging maps we use variables $E_{u,v}$ and $f_u$ for the first one, and $E'_{u,v}$ and $f'_u$ for the second.
We call $\phi'_{mm}$, $\phi'_{run}(u,w)$ and $\phi'_{\lnot run}(u,w)$ the versions of
$\phi_{mm}$, $\phi_{run}(u,w)$ and $\phi_{\lnot run}(u,w)$
that use the variables $E'_{u,v}$ and $f'_u$ instead of $E_{u,v}$ and $f_u$.}
If we want to find two different outputs for $u$, the resulting formula is:
$$\Phi_{wit1}=
\phi_{T}\wedge\wedge\phi_{mm}\wedge\phi_{run}(u,v)
\wedge\phi'_{mm}\wedge\phi'_{run}(u',v')
\wedge u=u'\wedge \phi_{\Up}(u)\wedge w\neq w$$
If we want to find $u$ that has a run in the first transducer but not the second, the resulting formula is:
$$\Phi_{wit1}=
\phi_{T}\wedge\wedge\phi_{mm}\wedge\phi_{run}(u,v)
\wedge\phi'_{mm}\wedge \phi'_{\lnot run}(u)
\wedge *\phi_{\Up}(u)$$
The length $\ell$ to limit the size of $u$ comes from Lemma~\ref{lem:DifferenceWitness}.
\end{proof}

\paragraph{Merged and open transitions:}
It now remains to encode the case where a single merging map uses muted or open transitions.
As shown in Proposition~\ref{prop:CompetingNSC}, the output of the witness does not matter
as much as finding a word $u\in\Up$ that uses one of those transitions.
We define 
    letters $a_1,\ldots, a_n\in\Sigma^n$ such that $u=a_1\cdot\ldots\cdot a_n$,
    words $v_0,\ldots, v_n\in P_T$ that are representatives of the equivalence classes of the states visited,
    booleans $m_{u,a}$ with $u\in P_T$ and $a\in \Sigma$ representing whether $u$ reading $a$ is 
    a muted transition, as specified in point 6 of Definition~\ref{def:MergingMap}, 
    and booleans $o_{u,a}$ with $u\in P_T$ and $a\in \Sigma$ representing whether $u$ reading $a$ is
    an open transition, as specified by their definition.


$m_{u,a}$ and $o_{u,a}$ are encoded as follows:
 $m_{u,a}\iff(\forall v\in P_T,\ va\notin P_\Gamma\lor u\not\equiv v)
    \land(\exists v\in P_T,\ u\equiv v\land f(va)\neq\bot)$, 
 and $o_{u,a}\iff(\forall v\in P_T,\ va\notin P_T\lor u\not\equiv v)$


As before, we know how to encode in equations that a word $u$ is accepted by $\Up$.
Adding the following equations ensures that
this $u$ goes through an open or muted transition
in an open completion of the transducer resulting from the merging map.
As seen in the proof of Proposition~\ref{prop:CompetingNSC},
this is sufficient to prove that there exists several (and in fact infinitely many) transducers compatible
with this merging map that would write different outputs when reading $u$: 
    $v_0=\varepsilon$,
    $\forall i\in [1,\ell], f_{v_i}\neq\bot$, 
    $\forall i\in [1,\ell],\ E_{v_i,(v_{i-1}\cdot a_i)}\lor o_{v_{i-1},a_i}$,
    $\forall i,j\in[1,\ell],\ E_{v_i,v_j}\Rightarrow E_{v_{i+1},v_{j+1}}$, 
    $T(v_\ell)\neq\bot$, 
    and $\exists i\in [1,\ell],\ o_{v_{i-1},a_i}\lor m_{v_{i-1},a_i}$.
\abchanged{We call $\phi_{mo}(u)$ the conjunction of those constraints and the definitions of $m_{u,a}$ and $o_{u,a}$.}

\begin{lemma}\label{lem:EncodeArbitraryTransition}
    Let $T$ be an observation table, if $(E_{u,v},f_u)$ represents a merging map of $T$, \abchanged{and $u\in\Sigma^*$ respects $\phi_{mo}(u)$}, 
    then $u$ uses a muted or open transition in an open completion of the transducer resulting from the merging map.
\end{lemma}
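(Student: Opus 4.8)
The plan is to prove soundness of the encoding: from any assignment making $(E_{u,v},f_u)$ a merging map (i.e.\ satisfying $\phi_{mm}$) and satisfying $\phi_{mo}(u)$, I will exhibit one open completion $\calM'$ of the resulting transducer $\calM$ in which $u=a_1\cdots a_\ell$ has an accepting run that traverses a muted or open transition. First I would check that the defining equivalences for $m_{u,a}$ and $o_{u,a}$ coincide with Definitions~\ref{def:MutedTransition} and~\ref{def:OpenTransition}. For $o_{u,a}$ this is a direct comparison of quantifier shapes: $o_{u,a}$ asserts that no $v\equiv u$ has $va\in P_T$, which is exactly the open-end condition. For $m_{u,a}$, the first conjunct states that no $v\equiv u$ has $va\in P_\Gamma$, and the second that some $v\equiv u$ has $f(va)$ defined; by rule~1 of Definition~\ref{def:MergingMap} the second conjunct is equivalent to $u,ua\in\dom(f)$, so $m_{u,a}$ holds precisely on muted pairs, and in particular a muted transition is a genuine transition of $\calM$ (with output $\varepsilon$).

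Next I would build $\calM'$. The representatives $v_0=\varepsilon,v_1,\ldots,v_\ell$ satisfy, at each step $i$, the disjunction $E_{v_i,v_{i-1}a_i}\lor o_{v_{i-1},a_i}$. For every index with $o_{v_{i-1},a_i}$ true I add the open transition $q_{v_{i-1}}\xrightarrow{a_i|\varepsilon}q_{v_i}$. Since $(v_{i-1},a_i)$ is then an open end, Lemma~\ref{lem:OpenTransition} guarantees that each such addition preserves compatibility, so $\calM'$ is a legitimate open completion; the consistency constraint $E_{v_i,v_j}\Rightarrow E_{v_{i+1},v_{j+1}}$ is what ensures that if the same open end is reused along the run we add a single, consistent target, respecting the ``at most one transition per open end'' clause of Definition~\ref{def:OpenTransition}.

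I would then trace the run $q_\varepsilon=q_{v_0}\to q_{v_1}\to\cdots\to q_{v_\ell}$ on $u$ in $\calM'$, verifying each step. If $E_{v_i,v_{i-1}a_i}$ holds, then $v_i\equiv v_{i-1}a_i$; since $f_{v_i}\neq\bot$ gives $v_i\in\dom(f)$, rule~1 forces $v_{i-1}a_i\in\dom(f)$, and Definition~\ref{def:restr} provides the genuine transition $q_{v_{i-1}}\xrightarrow{a_i}q_{v_{i-1}a_i}=q_{v_i}$ (this branch covers both ordinary and muted transitions). Otherwise $o_{v_{i-1},a_i}$ holds and the step uses the open transition added above. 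The constraint $T(v_\ell)\neq\bot$ together with the final-function clause of Definition~\ref{def:restr} and rule~5 of Definition~\ref{def:MergingMap}, which propagates finality across an equivalence class, shows $q_{v_\ell}$ is final, so the run is accepting and $u\in\dom(\sem{\calM'})$. Finally, the existential clause $\exists i.\,o_{v_{i-1},a_i}\lor m_{v_{i-1},a_i}$ pinpoints a step whose transition is open (added in $\calM'$) or muted (a muted transition of $\calM$, still present in $\calM'$), which is exactly a muted or open transition traversed by $u$.

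The main obstacle I expect is this tracing step: arguing that the sequence of representatives $v_i$ genuinely determines a single well-defined deterministic run inside a \emph{valid} open completion --- in particular using the consistency constraint to rule out adding conflicting open transitions --- and handling the finality of $q_{v_\ell}$ cleanly, since the encoded condition $T(v_\ell)\neq\bot$ must be reconciled with the requirement $T(v_\ell)\in\Gamma^*$ (or finality inherited from an equivalent word via rule~5) demanded by Definition~\ref{def:restr}. Once the correspondence established in the first step is in place, the remaining verifications are routine.
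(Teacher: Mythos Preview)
Your proposal is correct and follows essentially the same approach as the paper's proof: both argue that the constraints in $\phi_{mo}(u)$ describe a path through an open completion, with the disjunction $E_{v_i,v_{i-1}a_i}\lor o_{v_{i-1},a_i}$ allowing the step to continue via an open transition, the constraint $E_{v_i,v_j}\Rightarrow E_{v_{i+1},v_{j+1}}$ guaranteeing determinism of the added open transitions, and the existential clause singling out a muted or open step. Your treatment is considerably more detailed than the paper's three-line sketch --- in particular, you explicitly verify that the Boolean definitions of $m_{u,a}$ and $o_{u,a}$ match Definitions~\ref{def:MutedTransition} and~\ref{def:OpenTransition}, and you flag the subtlety about $T(v_\ell)\neq\bot$ versus $T(v_\ell)\in\Gamma^*$, which the paper leaves implicit.
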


\begin{proof}
The conditions above describe a path in the transducer resulting from the merging map,
with three additional conditions:
\begin{itemize}
    \item $\forall i\in [1,\ell],\ E_{v_i,(v_{i-1}\cdot a_i)}\lor o_{v_{i-1},a_i}$
    allows for a path to continue from an open end ($o_{v_{i-1},a_i}$) when no transition is available.
    \item $\forall i,j\in[1,\ell],\ E_{v_i,v_j}\Rightarrow E_{v_{i+1},v_{j+1}}$
    ensures that reading a certain $a$ in a certain state will always lead to the same successor. This is necessary as transitions coming from an open completion are not constrained by the equations defining the merging map.
    \item $\exists i\in [1,\ell],\ o_{v_{i-1},a_i}\abchanged{\lor m_{v_{i-1},a_i}}$
    ensures that $u$ uses at least one muted or open transition
\end{itemize}
\end{proof}

\begin{lemma}\label{lem:EncodeArbitraryWitness}
    Let $T$ be an observation table. 
    It can be encoded with word equations there exists a merging map of $T$ \abchanged{of size $n$}
    and a word $u\in\Up$ such that $u$ uses a muted or open transition in an open completion of the transducer resulting from this merging map.
\end{lemma}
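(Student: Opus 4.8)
The plan is to assemble the desired formula from the encodings already established and to add a single new ingredient pinning down the size of the merging map. Concretely, I would set
$$\Phi = \phi_{T}\wedge\phi_{mm}\wedge\phi_{size}^{n}\wedge\phi_{\Up}(u)\wedge\phi_{mo}(u),$$
where $\phi_{T}$ fixes the observation table, $\phi_{mm}$ (Lemma~\ref{lem:EncodeMergingMap}) forces $(E_{u,v},f_u)$ to denote a merging map of $T$, $\phi_{\Up}(u)$ asserts via the classical product with $\Aup$ that $u$ is accepted by $\Up$, and $\phi_{mo}(u)$ (Lemma~\ref{lem:EncodeArbitraryTransition}) asserts that $u$ has a run using a muted or open transition in an open completion of the resulting transducer. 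The only genuinely new piece is $\phi_{size}^{n}$.

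First I would encode the size constraint $\phi_{size}^{n}$. Since the size of a merging map is the number of equivalence classes of $\equiv$ inside $\dom(f)$, I would introduce $n$ fresh variables $r_1,\ldots,r_n$ ranging over the finite set $P_T$ and require $f_{r_i}\neq\bot$ for each $i$, pairwise inequivalence $\bigwedge_{i\neq j}\neg E_{r_i,r_j}$, and coverage $\forall v\in P_T.\ f_v\neq\bot\Rightarrow\bigvee_{i=1}^{n}E_{v,r_i}$. The first two conjuncts guarantee at least $n$ classes and the last guarantees at most $n$, so together they fix the size to exactly $n$. As each $r_i$ ranges over a fixed finite set and the coverage is a universal over $P_T$, this stays inside our fragment.

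Next I would bound the witness length, which is what makes the existential over $u$ admissible. Lemma~\ref{lem:ArbitraryWitness} guarantees that whenever some element of $\dom(\sem{\calM}_{|\Up})$ uses a given transition, one of length at most $2\cdot|\Aup|\cdot|\calM|$ does too; since the resulting transducer has $n$ states I would take $\ell=2\cdot|\Aup|\cdot n$ and represent $u$ by $\ell$ letter variables $a_1,\ldots,a_\ell$ together with the representatives $v_0,\ldots,v_\ell$ already used inside $\phi_{mo}(u)$, with shorter witnesses absorbed exactly as in the definition of $\phi_{mo}$.

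Correctness is then a direct application of the cited lemmas together with Proposition~\ref{prop:CompetingNSC}: any model of $\Phi$ yields a size-$n$ merging map and a word $u\in\Up$ using a muted or open transition in an open completion, and conversely any such pair furnishes a satisfying assignment whose witness can be shortened to length $\ell$ by Lemma~\ref{lem:ArbitraryWitness}. I expect the main obstacle to be the bookkeeping at the interface between $\phi_{size}^{n}$ and the open-completion part of $\phi_{mo}$: the witnessed transition must leave a state that genuinely lives in the size-$n$ machine, i.e.\ an $\equiv$-class inside $\dom(f)$, and one must check that the conjunct $\forall i.\ f_{v_i}\neq\bot$ already present in $\phi_{mo}(u)$ indeed enforces this without spuriously forbidding open transitions. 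Once this is verified, the remaining equations are routine reuses of formulas built in the preceding lemmas.
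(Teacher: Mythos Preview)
Your proposal is correct and follows essentially the same route as the paper: the paper's proof also builds the conjunction $\Phi_{wit2}=\phi_{T}\wedge\phi_{mm}\wedge\phi_{mo}(u)\wedge\phi_{\Up}(u)$, appeals to Lemma~\ref{lem:EncodeArbitraryTransition} for the muted/open part, and bounds the witness length $\ell$ by a small-model lemma.

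Two small differences are worth noting. First, you spell out an explicit size constraint $\phi_{size}^{n}$ via $n$ pairwise-inequivalent representatives covering $\dom(f)$; the paper leaves this implicit in the phrase ``of size $n$'' and never writes the corresponding formula, so your version is more complete on this point. Second, for the bound on $|u|$ you invoke Lemma~\ref{lem:ArbitraryWitness} (a word using a given transition can be shortened to length $2\cdot|\Aup|\cdot n$), whereas the paper cites Lemma~\ref{lem:DifferenceWitness} (the two-transducer difference bound). Your choice is the directly applicable one here, since what is needed is a short word hitting a specific muted or open transition, not a short word distinguishing two machines; the paper's citation is at best indirect. Your closing remark about the conjunct $\forall i.\ f_{v_i}\neq\bot$ in $\phi_{mo}(u)$ is a reasonable sanity check and does not affect the argument. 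The invocation of Proposition~\ref{prop:CompetingNSC} in your correctness paragraph is unnecessary for this lemma, which is purely about the encoding; it belongs to the later assembly in Proposition~\ref{prop:EncodeCompetingMinGen}.
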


\begin{proof}
Using several sets of word equations, what we want is a merging map represented by the variables $E,f$ \abchanged{of size $n$} and an input word $u\in\Up$ such that $|u|\leq\ell$,
    and $u$ uses a muted or open transition in an open completion of the transducer resulting from the merging map.

Encoding $u\in\Up$ is possible through classical results.
Lemma~\ref{lem:EncodeArbitraryTransition} ensures us that it is possible to encode those points, \rbchanged{with the resulting formula being $\Phi_{wit2}=\phi_{T}\wedge\phi_{mm}\wedge\phi_{mo}(u)\wedge\phi_{\Up}(u)$.}
The length $\ell$ to limit the size of $u$ comes from Lemma~\ref{lem:DifferenceWitness}.
\end{proof}

We can thus conclude with the following proposition:

\begin{proposition}\label{prop:EncodeCompetingMinGen}
    Let $T$ be an observation table.
    The subroutines MinGen($T$) and CompetingMinGen($T$) can be effectively implemented.
\end{proposition}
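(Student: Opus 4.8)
The plan is to assemble the string-equation gadgets established in Lemmas~\ref{lem:EncodeMergingMap}--\ref{lem:EncodeArbitraryWitness} into two decision procedures, one per subroutine, and to argue that each terminates because the relevant search spaces are finite. For MinGen($T$), I would iterate over the target number of states $n = 1, 2, \ldots$ and, for each $n$, test satisfiability of $\phi_T \wedge \phi_{mm}$ augmented with a constraint forcing exactly $n$ equivalence classes in $\dom(f)$. By Lemma~\ref{lem:EncodeMergingMap} any model of this formula is a merging map of $T$, and by Lemma~\ref{lem:resultingtransduceralignment} its resulting transducer is compatible with $T$; conversely, Proposition~\ref{prop:MinimalCorrespondingTable} guarantees that some minimal compatible transducer results from a merging map, so the least $n$ for which the formula is satisfiable equals the minimal size and yields a minimal compatible transducer. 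Termination at a finite $n$ follows because, once $T$ is closed and consistent, at least one compatible transducer exists.

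For CompetingMinGen($T$), I would invoke Proposition~\ref{prop:CompetingNSC} to reduce the existence of two competing minimal transducers to a disjunction of two cases, each of which has already been encoded: case (i) by the formula $\Phi_{wit1}$ of Lemma~\ref{lem:EncodeDifferentMergingMap} (two merging maps of the minimal size $n$ whose resulting transducers disagree on some $u \in \Up$), and case (ii) by the formula $\Phi_{wit2}$ of Lemma~\ref{lem:EncodeArbitraryWitness} (a single merging map of size $n$ together with an open completion and a word $u \in \Up$ traversing an open or muted transition). At the minimal size $n$ returned by MinGen($T$), I would test $\Phi_{wit1} \lor \Phi_{wit2}$ for satisfiability: a model yields the witness $u$ distinguishing two minimal compatible transducers, whereas unsatisfiability of both certifies, via Proposition~\ref{prop:CompetingNSC}, that the minimal compatible transducer is unique up to equivalence on $\Up$.

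The decidability of each test rests on the fact that every formula involved lies in the SAT-encodable fragment of word equations fixed earlier, and --- crucially --- that the existential quantifier over the witness word $u$ can be bounded. This is where the small model theorems enter: Lemma~\ref{lem:DifferenceWitness} bounds the length of a shortest $u \in \Up$ on which two candidate transducers disagree, and Lemma~\ref{lem:ArbitraryWitness} bounds the length of a shortest $u \in \Up$ that traverses a prescribed open or muted transition. Both bounds are polynomial in $|\calM_1|$, $|\calM_2|$ and $|\Aup|$, so the letter variables $a_1, \ldots, a_\ell$ encoding $u$ are polynomially many and each formula stays finite with its satisfiability check in \NP.

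The main obstacle I anticipate is pinning down these length bounds uniformly across both cases so that a single $\ell$ suffices for the combined encoding $\Phi_{wit1} \lor \Phi_{wit2}$, and stating the bound in terms of the current minimal size $n$ rather than the unknown target. This matters because the argument must ultimately feed into Proposition~\ref{prop:AlgoSizenFinite}: only if, for each fixed $n$, the number of transducers that CompetingMinGen($T$) can distinguish and eliminate is finite does the number of calls between two equivalence queries stay bounded, and hence does Algorithm~\ref{proc:algoUpTransducer} terminate.
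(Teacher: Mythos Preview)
Your proposal is correct and follows essentially the same approach as the paper: iterate on the target size $n$, use Lemmas~\ref{lem:EncodeMergingMap}, \ref{lem:EncodeDifferentMergingMap}, and \ref{lem:EncodeArbitraryWitness} to encode the existence of a merging map, two competing merging maps, or a muted/open-transition witness, and invoke Proposition~\ref{prop:CompetingNSC} to certify that these two cases exhaust CompetingMinGen($T$). Your treatment is in fact more explicit than the paper's, notably in distinguishing Lemma~\ref{lem:ArbitraryWitness} as the correct small-model bound for the open/muted case (the paper's appendix cites Lemma~\ref{lem:DifferenceWitness} there, which appears to be a typo); the ``obstacle'' you flag about a uniform $\ell$ is handled simply by taking the maximum of the two polynomial bounds at the current $n$.
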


\begin{proof}
Lemma~\ref{lem:EncodeMergingMap},~\ref{lem:EncodeDifferentMergingMap} and~\ref{lem:EncodeArbitraryWitness}
ensure that we can decide the existence of a merging map, two competing merging maps,
or a word using a muted or open transition, if given a number $n$ of states.
Notably, those last two points combine thanks to ooposition~\ref{prop:CompetingNSC} to guarantee that if
two competing transducers of size $n$ exist, then we can find them that way.

To find the proper number of state $n$, we try to guess a transducer with $i$ states,
$i$ starting at $1$ and incremented each time we cannot find a transducer.
\end{proof}



\end{document}